\pgfplotsset{compat=1.17}
\newtheorem{theorem}{Theorem}
\newtheorem{definition}{Definition}
\newtheorem{corollary}{Corollary}
\theoremstyle{definition}
\newtheorem{remark}{Remark}
\theoremstyle{definition} \newtheorem{example}{Example}
\def \extended {1}
\begin{document}
\title{Generalized Information Inequalities via Submodularity, and Two Combinatorial Problems} 

\author{
  \IEEEauthorblockN{Gunank Jakhar, Gowtham R. Kurri, Suryajith Chillara}
  \IEEEauthorblockA{International Institute of Information Technology, Hyderabad \\
                    Hyderabad, India.\\ 
                    Email: \{gunank.jakhar@research., gowtham.kurri@, suryajith.chillara@\}iiit.ac.in}
  \and
  \IEEEauthorblockN{Vinod M. Prabhakaran}
  \IEEEauthorblockA{Tata Institute of Fundamental Research\\
                    Mumbai, India.\\
                    Email: vinodmp@tifr.res.in}
}

\maketitle

\begin{abstract}
  It is well known that there is a strong connection between entropy inequalities and submodularity, since the entropy of a collection of random variables is a submodular function. Unifying frameworks for information inequalities arising from submodularity were developed by Madiman and Tetali (2010) and Sason (2022). Madiman and Tetali (2010) established strong and weak fractional inequalities that subsume classical results such as Han's inequality and Shearer's lemma. Sason (2022) introduced a convex-functional framework for generalizing Han's inequality, and derived unified inequalities for submodular and supermodular functions. In this work, we build on these frameworks and make three contributions. First, we establish convex-functional generalizations of the strong and weak Madiman and Tetali inequalities for submodular functions. Second, using a special case of the strong Madiman–Tetali inequality, we derive a new Loomis–Whitney–type projection inequality for finite point sets in $\mathbb{R}^d$, which improves upon the classical Loomis–Whitney bound by incorporating slice-level structural information. Finally, we study an extremal graph theory problem that recovers and extends the previously known results of Sason (2022) and Boucheron~\emph{et al.}, employing Shearer’s lemma in contrast to the use of Han’s inequality in those works.  
\end{abstract}

\section{Introduction}

A set function $f:2^{[1:n]}\rightarrow \mathbb{R}$ is said to be submodular if it satisfies the diminishing returns property, i.e., the additional value gained by adding an element to a set decreases as the set grows larger~\cite{Fujishige05}. Submodular functions have applications in machine learning~\cite{krause2007near}, combinatorial optimization~\cite{ConfortiC84}, algorithmic game theory~\cite{lehmann2001combinatorial}, social networks~\cite{kempe2003maximizing}, and statistical physics~\cite{VanEnterFS93}. Matroid rank function, cut function in graphs, log-determinant functions associated with positive semidefinite matrices, coverage functions are some examples of submodular functions. Another example is entropy: for any collection of random variables $X_1,X_2,\dots,X_n$, the function $f(S)=H(X_S)$ is submodular~\cite{Fujishige05}. This observation has played a central role in connecting combinatorial inequalities with information-theoretic ones, and has led to a rich body of work exploring inequalities that hold for general submodular functions and their implications for entropy.

When specialized to entropy, submodularity gives rise to a number of fundamental information-theoretic inequalities. Classical examples include Han's inequality~\cite{Han78}, Shearer's lemma~\cite{ChungGFS86,Radhakrishnan03}, and Fujishige's inequalities~\cite{Fujishige78}. All these inequalities relate the joint entropy of a collection of random variables to the entropies of their subsets. A unifying framework was developed by Madiman and Tetali~\cite{MadimanT10}, who established strong and weak forms of inequalities for submodular functions. Their results subsume Han's inequality, Shearer's lemma, and Fujishige's inequalities as special cases, and clarify the role of fractional partitions, packings, and coverings in this context. Necessary and sufficient conditions for equality to hold in the weak form of Madiman-Telali's inequalities were obtained in \cite{JakharKCP25}. Sason~\cite{Sason22} introduced a convex-functional approach to deriving families of inequalities for submodular functions, extending the scope of classical results. In this framework, convex functions are applied to normalized submodular function values, yielding new inequalities and refinements of existing ones. Several other works have also explored information inequalities exploring submodularity, including \cite{kishi2014entropy,tian2011inequalities,iyer2021generalized}.

In addition to information theory, entropy inequalities find applications in a wide range of areas such as combinatorics and graph theory~\cite{Radhakrishnan03,MadimanT10}, concentration inequalities~\cite{BoucheronLM13,GavinskyLSS15}, statistical inference~\cite[Chapter~8]{polyanskiy2025information}, etc. In combinatorics, Shearer's Lemma has been utilized to establish an upper bound on the size of intersecting families of subgraphs within a complete graph~\cite{ChungGFS86}. Friedgut and Kahn~\cite{FriedgutK98} applied it to bound the number of copies of a hypergraph contained within another hypergraph. Kahn~\cite{Kahn01} derived an upper bound on the number of independent sets in a regular bipartite graph, which was subsequently generalized to any arbitrary graph in \cite{MadimanT10} by employing the strong form of Madiman and Tetali's inequalities. More broadly, Friedgut~\cite{Friedgut04} showed that various forms of Shearer's lemma recover several classical inequalities, including the Cauchy-Schwarz inequality, H\"older's inequality, the monotonicity of the Euclidean norm, and the monotonicity of weighted means. 

Radhakrishnan~\cite{Radhakrishnan03} provides a comprehensive exposition on using the entropy method as a framework for counting problems and surveys many of the aforementioned applications. In particular, he gives an entropy-based proof of the classical Loomis-Whitney projection inequality~\cite{LoomisH49}. This inequality has been used for demonstrating lower bounds on communication costs in distributed-memory matrix multiplication~\cite{Irony04}. More recently, Sason~\cite{Sason22} studied an extremal graph theory problem using Han's inequality for families consisting of all sets of equal size, thereby generalizing an edge isoperimetric inequality on the hypercube previously considered in~\cite{BoucheronLM13}.  

In this work, we leverage submodularity to establish generalized information inequalities, and we investigate two combinatorial problems using existing information inequalities. Our main contributions are as follows:
\begin{itemize}[leftmargin=*]
    \item We first establish convex-functional generalizations of the strong and weak forms of the Madiman-Tetali's inequalities for submodular functions~\cite{MadimanT10}, in which a convex function $g$ is applied to normalized submodular function values associated with a fractional partition (Theorem~\ref{SasonImprov} and Corollary~\ref{WeakSasonImprov}). Special cases of these inequalities yield bounds that are stronger than the corresponding result in \cite[Equation (22)]{Sason22} as well as the entropy power inequalities for joint distributions~\cite[Corollary~VI]{MadimanT10}.
    \item We next apply a special case of the strong form of the Madiman-Tetali's inequalities~\cite{MadimanT10} to derive a new Loomis-Whitney-type projection inequality for point sets in $\mathbb{R}^d$ (Theorems~\ref{StrongLW} and \ref{StrongLW-ddim}). The resulting bound strengthens the classical Loomis-Whitney inequality~\cite{LoomisH49,Radhakrishnan03} by incorporating additional slice-level structural information. 
    \item Finally, we study an extremal graph theory problem motivated by confusability graphs of noisy channels, building on and extending the formulation considered in \cite{Sason22}. In contrast to the use of Han's inequality in \cite{Sason22} for families consisting of all sets of equal size, we employ Shearer's lemma, a special case of the weak form of Madiman-Tetali's inequalities that applies to arbitrary set families, to obtain upper bounds on the number of edges (Theorem~\ref{ExtremalGraph}) that recover and extend previously known results~\cite[Theorem~4.2]{BoucheronLM13} and \cite[Theorem~2]{Sason22}.  
\end{itemize}

\section{Preliminaries}
\textit{Notation}: We use $[i : i+k]$ to represent the set $\{i, i+1, \ldots, i+k-1, i+k\}$, where $i,k\in\mathbb{N}$. For any set $A \subseteq [1:n]$, $< A$ denotes the set of all indices in $[1:n]$ that are less than every index in $A$, and $> A$ denotes the set of all indices in $[1:n]$ that are greater than every index in $A$. The power set of a set $A$ is denoted by $2^A$. $A^{\text{c}}$ denotes the complement of a set $A$. We use $\mathcal{F}$ to denote a family of subsets of $[1:n]$ allowing for repetitions, represented as $\{\!\!\{\cdot\}\!\!\}$.

\begin{definition}[Sub/Supermodular, and Modular Functions~\cite{Fujishige05}]
A set function $f: 2^{[1:n]} \rightarrow \mathbb{R}$ is called submodular if
\begin{align}
    f(S) + f(T) \geq f(S \cup T) + f(S \cap T),\ \forall S,T \subseteq [1:n].
\end{align}
A function $f: 2^{[1:n]} \rightarrow \mathbb{R}$ is called supermodular if $-f$ is submodular, and modular if it is both submodular and supermodular. 
\end{definition}
If $f(\phi) = 0,~f$ is modular if and only if $f(A)=\sum_{i\in A}f(\{i\})$, $\forall A\subseteq [1:n]$. For $S,T \subseteq [1:n]$, the conditional version of a submodular function $f$ is defined as $f(S|T) = f(S \cup T) - f(T)$ \cite{MadimanT10}.

\begin{definition}[Fractional Partition, Covering, Packing]
    \hspace{0.5em}
    \begin{enumerate}[leftmargin=*]
        \item Given a family $\mathcal{F}$ of subsets of $[1:n]$, a function $\gamma:~\mathcal{F}\rightarrow\mathbb{R}_{+}$ is called a fractional partition if, for all $i \in [1:n]$, $\sum\limits_{S \in \mathcal{F}: i \in S} \gamma(S) = 1$.
        \item Given a family $\mathcal{F}$ of subsets of $[1:n]$, a function $\alpha:~\mathcal{F}\rightarrow\mathbb{R}_{+}$ (resp. $\beta:~\mathcal{F}\rightarrow\mathbb{R}_{+}$) is called a fractional covering (resp. packing) if, for all $i \in [1:n]$, $\sum\limits_{S \in \mathcal{F}: i \in S} \alpha(S) \geq 1$ (resp. $\sum\limits_{S \in \mathcal{F}: i \in S} \beta(S) \leq 1$).        
    \end{enumerate}
\end{definition}

In this paper, we make use of the following inequalities for submodular functions, due to Madiman and Tetali~\cite{MadimanT10}.

\begin{theorem}[{\!\!\cite[Theorem~I]{MadimanT10}}]\label{thm:MT}
    Let $f: 2^{[1:n]} \rightarrow \mathbb{R}$ be any submodular function with $f(\phi) = 0$. Let $\gamma:\mathcal{F}\rightarrow \mathbb{R}_+$ be any fractional partition with respect to a family $\mathcal{F}$ of subsets of $[1:n]$. Then the following statements hold:
    \begin{enumerate}[leftmargin=*]
        \item {[{Strong form}]}
        \begin{align}\label{eq:strongMTIneq}
           \hspace{-0.5cm} \sum_{S \in \mathcal{F}} \gamma(S)f(S | S^{\text{c}} \setminus{> S})  \leq f([1:n]) \leq \sum\limits_{S \in \mathcal{F}} \gamma(S)f(S | < S).
        \end{align}

        \item {[{Weak form}]}
        \begin{align}\label{eq:MTIneq}
            \sum\limits_{S \in \mathcal{F}} \gamma(S)f(S | S^\emph{c}) \leq f([1:n]) \leq \sum\limits_{S \in \mathcal{F}} \gamma(S)f(S).
        \end{align}
    \end{enumerate}
    The fractional partition $\gamma$ in the lower and upper bounds can be replaced by fractional packing $\beta$ and fractional covering $\alpha$, respectively, if the submodular function $f$ is such that $f([1:j])$ is non-decreasing in $j$ for $j\in[1:n]$.
\end{theorem}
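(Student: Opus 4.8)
The plan is to reduce both inequalities to two elementary ingredients. The first is the ``diminishing returns'' characterization of submodularity: $f$ is submodular if and only if $f(\{i\}\mid A)\ge f(\{i\}\mid B)$ whenever $A\subseteq B\subseteq[1:n]$ and $i\notin B$, where $f(\cdot\mid\cdot)$ is the conditional set function from the preliminaries; iterating this over a set $T$ disjoint from $B$ gives $f(T\mid A)\ge f(T\mid B)$. The second is the telescoping identity $f([1:n])=\sum_{i=1}^{n}f(\{i\}\mid[1:i-1])$, obtained from $f([1:i])-f([1:i-1])=f(\{i\}\mid[1:i-1])$ by summing over $i$ and using $f(\phi)=0$. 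The strategy is then uniform: bound each term $f(S\mid\cdot)$ above or below by $\sum_{i\in S}f(\{i\}\mid[1:i-1])$, multiply by $\gamma(S)$, exchange the order of summation, and use the fractional-partition property $\sum_{S\ni i}\gamma(S)=1$ to land exactly on $f([1:n])$.

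For the upper bound of the strong form, fix $S=\{s_1<s_2<\dots<s_k\}\in\mathcal F$ and telescope along the chain $<S\subseteq{<S}\cup\{s_1\}\subseteq\dots\subseteq{<S}\cup\{s_1,\dots,s_k\}=S\cup{<S}$ to get $f(S\mid {<S})=\sum_{j=1}^{k}f(\{s_j\}\mid {<S}\cup\{s_1,\dots,s_{j-1}\})$. Since $<S\subseteq[1:s_1-1]$ and $s_1,\dots,s_{j-1}<s_j$, the conditioning set ${<S}\cup\{s_1,\dots,s_{j-1}\}$ is contained in $[1:s_j-1]$, so diminishing returns gives $f(\{s_j\}\mid {<S}\cup\{s_1,\dots,s_{j-1}\})\ge f(\{s_j\}\mid[1:s_j-1])$, hence $f(S\mid {<S})\ge\sum_{i\in S}f(\{i\}\mid[1:i-1])$. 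Multiplying by $\gamma(S)$, summing over $\mathcal F$, swapping sums, and applying $\sum_{S\ni i}\gamma(S)=1$ yields $\sum_{S}\gamma(S)f(S\mid {<S})\ge\sum_{i}f(\{i\}\mid[1:i-1])=f([1:n])$.

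For the lower bound, the observation is that $S\cup(S^{\mathrm c}\setminus{>S})=[1:s_k]$, so $f(S\mid S^{\mathrm c}\setminus{>S})=f([1:s_k])-f([1:s_k]\setminus S)$; re-inserting $s_1,\dots,s_k$ in order telescopes this as $\sum_{j=1}^{k}f(\{s_j\}\mid A_j)$ with $A_j=([1:s_k]\setminus S)\cup\{s_1,\dots,s_{j-1}\}$, and one checks $A_j\supseteq[1:s_j-1]$. Now diminishing returns runs in the other direction, $f(\{s_j\}\mid A_j)\le f(\{s_j\}\mid[1:s_j-1])$, giving $f(S\mid S^{\mathrm c}\setminus{>S})\le\sum_{i\in S}f(\{i\}\mid[1:i-1])$, and the same averaging over $\gamma$ produces $\sum_{S}\gamma(S)f(S\mid S^{\mathrm c}\setminus{>S})\le f([1:n])$. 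The weak form then follows from the strong form by one more use of submodularity at the level of the conditioning sets: $f(S\mid {<S})\le f(S\mid\phi)=f(S)$ because $\phi\subseteq{<S}$ and $S\cap{<S}=\phi$, and $f(S\mid S^{\mathrm c})\le f(S\mid S^{\mathrm c}\setminus{>S})$ because $S^{\mathrm c}\setminus{>S}\subseteq S^{\mathrm c}$ and $S\cap S^{\mathrm c}=\phi$; chaining these with the strong bounds gives $\sum_S\gamma(S)f(S\mid S^{\mathrm c})\le f([1:n])\le\sum_S\gamma(S)f(S)$.

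Finally, for the packing/covering extension I would revisit the two chains above: everything is unchanged until the last step, where $\sum_{S\ni i}\gamma(S)=1$ is replaced by $\sum_{S\ni i}\alpha(S)\ge1$ (covering, for the upper bound) or $\sum_{S\ni i}\beta(S)\le1$ (packing, for the lower bound). The hypothesis that $f([1:j])$ is non-decreasing in $j$ says precisely that every marginal $f(\{i\}\mid[1:i-1])=f([1:i])-f([1:i-1])$ is nonnegative, which is exactly what licenses replacing $\sum_{S\ni i}\alpha(S)f(\{i\}\mid[1:i-1])\ge f(\{i\}\mid[1:i-1])$ and $\sum_{S\ni i}\beta(S)f(\{i\}\mid[1:i-1])\le f(\{i\}\mid[1:i-1])$, after which the telescoping identity closes the argument. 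The only real content — and the step I expect to need the most care — is the bookkeeping in the telescoping: choosing the right chain of subsets and verifying that the conditioning set at each step $j$ contains $[1:s_j-1]$, so that the diminishing-returns inequality applies in the claimed direction; everything else is nonnegativity and interchange of finite sums.
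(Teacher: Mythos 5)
Your proof is correct: the single-element telescoping of $f([1:n])$, the diminishing-returns bounds on each $f(S\mid\cdot)$ via chains whose conditioning sets compare to $[1:s_j-1]$, the exchange of summation with the fractional-partition property, and the nonnegativity-of-marginals argument for the packing/covering extension all check out. The paper states this theorem as a cited result from Madiman and Tetali without reproducing a proof, and your argument is essentially the standard one from that reference, so there is nothing to contrast.
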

The upper bound in \eqref{eq:MTIneq}, with fractional covering $\alpha$ has been referred to as fractional subadditivity of submodular functions in the literature~\cite{Feige06}. There are two well-known special cases of the upper bound in \eqref{eq:MTIneq} for entropy:
\begin{enumerate}[leftmargin=*]
    \item Han's inequality~\cite{Han78,CoverT06} - $\mathcal{F} = \{\!\!\{ S \subseteq [1:n]: |S| = k\}\!\!\}, \gamma(S) = 1/\binom{n-1}{k-1}$.
    \item Shearer's Lemma~\cite{ChungGFS86,Radhakrishnan03} - For an arbitrary $\mathcal{F}$ and  $\alpha(S)=\frac{1}{k(\mathcal{F})},~\forall S \in \mathcal{F}$, where $k(\mathcal{F})$ denotes the maximum integer $k$ such that each $i\in[1:n]$ belongs to at least $k$ members of $\mathcal{F}$.
\end{enumerate}

\section{Generalized Information Inequalities via Submodularity}\label{sec:ineqsub}
In the following theorem, we present generalized information inequalities via submodularity, building on the information inequalities in Theorem~\ref{thm:MT}.
\begin{theorem}\label{SasonImprov}
    Let $\gamma$ be a fractional partition with respect to a family $\mathcal{F}$ of subsets of $[1:n]$. Let $f: 2^{[1:n]} \rightarrow \mathbb{R}$ be a submodular function with $f(\phi) = 0$, and $g: \mathbb{R} \rightarrow \mathbb{R}$ be a monotonically non-decreasing function.
    \begin{enumerate}[leftmargin=*]
        \item If $g$ is convex, then
            \begin{align}
                g\bigg( \frac{f([1:n])}{n} \bigg) \leq \sum\limits_{S \in \mathcal{F}} \frac{\gamma(S)|S|}{n} g\bigg(\frac{f(S | < S)}{|S|} \bigg). \label{eq: SasonIneq}
            \end{align}
        \item If $g$ is concave, then
            \begin{align}
                g\bigg( \frac{f([1:n])}{n} \bigg) \geq \sum\limits_{S \in \mathcal{F}} \frac{\gamma(S)|S|}{n} g\bigg(\frac{f(S | S^{\text{c}} \setminus{> S})}{|S|} \bigg). \label{eq: SLBIneq}
            \end{align}
    \end{enumerate}
\end{theorem}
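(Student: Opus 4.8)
The plan is to derive both parts from the strong form of the Madiman--Tetali inequality \eqref{eq:strongMTIneq} by applying the monotone function $g$ to both sides and then invoking Jensen's inequality. The first step I would carry out is a normalization observation: summing the fractional-partition constraint $\sum_{S \in \mathcal{F}:\, i \in S}\gamma(S) = 1$ over all $i \in [1:n]$ and swapping the order of summation gives $\sum_{S \in \mathcal{F}}\gamma(S)\,|S| = n$. Thus, setting $\lambda_S := \gamma(S)\,|S|/n$, the numbers $(\lambda_S)_{S \in \mathcal{F}}$ are nonnegative and sum to $1$, i.e.\ they form a probability distribution on $\mathcal{F}$ --- this is precisely what makes a Jensen-type step available.

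For part~(1), I would divide the upper bound in \eqref{eq:strongMTIneq} by $n$ and rewrite it as
\[
\frac{f([1:n])}{n} \;\le\; \sum_{S \in \mathcal{F}} \frac{\gamma(S)}{n}\, f(S \mid {<}S) \;=\; \sum_{S \in \mathcal{F}} \lambda_S\,\frac{f(S \mid {<}S)}{|S|}.
\]
Since $g$ is monotonically non-decreasing it preserves this inequality, and since $g$ is convex with $(\lambda_S)_{S\in\mathcal{F}}$ a probability distribution, Jensen's inequality gives
\[
g\!\left(\frac{f([1:n])}{n}\right) \le g\!\left(\sum_{S \in \mathcal{F}} \lambda_S\,\frac{f(S \mid {<}S)}{|S|}\right) \le \sum_{S \in \mathcal{F}} \lambda_S\, g\!\left(\frac{f(S \mid {<}S)}{|S|}\right),
\]
which is \eqref{eq: SasonIneq}. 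Part~(2) is the mirror image: dividing the lower bound in \eqref{eq:strongMTIneq} by $n$ yields $\sum_{S \in \mathcal{F}} \lambda_S\, f\bigl(S \mid S^{\mathrm{c}}\setminus{>}S\bigr)/|S| \le f([1:n])/n$; applying the non-decreasing $g$ and then Jensen's inequality for the concave $g$ (now in the reverse direction) produces \eqref{eq: SLBIneq}.

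I do not expect a serious obstacle here --- the content is essentially ``Madiman--Tetali plus Jensen'' once the weights have been normalized. The one subtlety worth flagging is that monotonicity of $g$ is used essentially and is not redundant with convexity/concavity: \eqref{eq:strongMTIneq} relates $f([1:n])/n$ to a convex combination of the points at which $g$ is to be evaluated, so we need $g$ to be order-preserving in order to push that relation inside $g$, after which convexity (resp.\ concavity) finishes the job via Jensen. A secondary point to verify is that $\mathcal{F}$ is nonempty and $n \ge 1$, so that the distribution $(\lambda_S)$ and the normalized quantities are well defined; both are implicit in the setup.
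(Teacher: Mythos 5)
Your proof is correct and follows essentially the same route as the paper's: divide the strong Madiman--Tetali bound by $n$, observe that the weights $\gamma(S)|S|/n$ sum to one by swapping the order of summation in the fractional-partition constraint, apply the non-decreasing $g$, and finish with Jensen's inequality (reversed for the concave case). Your remark that monotonicity of $g$ is genuinely needed to push the Madiman--Tetali inequality inside $g$ before Jensen applies is exactly the point the paper's argument relies on as well.
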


\begin{remark}
    Both the assertions of Theorem~\ref{SasonImprov} continue to hold even if we replace $f$ by a supermodular (instead of submodular) and $g$ by a monotonically non-increasing function (instead of a non-decreasing function).
\end{remark}

\begin{remark}\label{fcremark}
    Assertions analogous to those of Theorem~\ref{SasonImprov} can also be obtained for the notions of fractional covering $\alpha$ and fractional packing $\beta$. Specifically, fractional covering and fractional packing may be used in place of a fractional partition in \eqref{eq: SasonIneq} and \eqref{eq: SLBIneq}, respectively, with appropriate normalization by their weights, provided that the submodular function $f$ is such that $f([1:j])$ is non-decreasing in $j$ for $j\in[1:n]$. These details are presented in
\if \extended 1%
    Appendix \ref{appendix:g}.
\fi
\if \extended 0%
    the extended version~\cite[Appendix~B]{JakharKCP26}.
\fi
\end{remark}

\begin{proof}[Proof Sketch of Theorem~\ref{SasonImprov}]
    Normalizing the submodular function values on both sides of the upper bound inequality in \eqref{eq:strongMTIneq} and observing that $\sum_{S \in \mathcal{F}} \frac{\gamma(S)|S|}{n} = 1$ since $\gamma$ is a fractional partition, part 1) follows by applying the convex function $g$ and invoking Jensen's inequality. Part 2) can be proved in a similar manner. A detailed proof is provided in
    \if \extended 1%
    Appendix~\ref{appendix:a}.
    \fi
    \if \extended 0%
    \cite[Appendix~A]{JakharKCP26}.
    \fi
\end{proof}

The bounds on $g\big( \frac{f([1:n])}{n} \big)$ in Theorem~\ref{SasonImprov} can be weakened by removing the conditioning term in \eqref{eq: SasonIneq} and adding more conditioning terms in \eqref{eq: SLBIneq}, respectively, keeping the assumptions on $f$ and $g$ intact. Formally, we get the following corollary.

\begin{corollary}\label{WeakSasonImprov}
    Let $\gamma$ be a fractional partition with respect to a family $\mathcal{F}$ of subsets of $[1:n]$. Let $f: 2^{[1:n]} \rightarrow \mathbb{R}$ be a submodular function with $f(\phi) = 0$, and $g: \mathbb{R} \rightarrow \mathbb{R}$ be a monotonically non-decreasing function.
    \begin{enumerate}[leftmargin=*]
        \item If $g$ is convex, then
            \begin{align}
                g\bigg( \frac{f([1:n])}{n} \bigg) \leq \sum\limits_{S \in \mathcal{F}} \frac{\gamma(S)|S|}{n} g\Big(\frac{f(S)}{|S|} \Big). \label{eq: WSasonIneq}
            \end{align}
            Moreover, if $g$ is strictly increasing, the equality in \eqref{eq: WSasonIneq} holds if and only if $g$ is linear in an interval containing $\big\{ \frac{f(S)}{|S|} : S \in \mathcal{F} \big\}$, and $f$ is modular.
        \item If $g$ is concave, then
            \begin{align}
                g\bigg( \frac{f([1:n])}{n} \bigg) \geq \sum\limits_{S \in \mathcal{F}} \frac{\gamma(S)|S|}{n} g\Big(\frac{f(S | S^{\text{c}})}{|S|} \Big). \label{eq: WSLBIneq}
            \end{align}
            Moreover, if $g$ is strictly increasing, the equality in \eqref{eq: WSLBIneq} holds if and only if $g$ is linear in an interval containing $\big\{ \frac{f(S | S^{\text{c}})}{|S|} : S \in \mathcal{F} \big\}$, and $f$ is modular. 
    \end{enumerate}
\end{corollary}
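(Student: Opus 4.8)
I would obtain the two inequalities by weakening the bounds of Theorem~\ref{SasonImprov}. For \eqref{eq: WSasonIneq}: since $<S$ is disjoint from $S$, submodularity of $f$ gives $f(S\mid <S)\le f(S\mid\phi)=f(S)$, hence $f(S\mid <S)/|S|\le f(S)/|S|$, and since $g$ is non-decreasing, $g\big(f(S\mid <S)/|S|\big)\le g\big(f(S)/|S|\big)$; multiplying by $\gamma(S)|S|/n\ge 0$ and summing over $S\in\mathcal F$ turns the right-hand side of \eqref{eq: SasonIneq} into that of \eqref{eq: WSasonIneq}. Equation \eqref{eq: WSLBIneq} follows the same way from \eqref{eq: SLBIneq}, now using $S^{\text{c}}\setminus{> S}\subseteq S^{\text{c}}$ and the disjointness of $S^{\text{c}}$ and $S$ to get $f(S\mid S^{\text{c}})\le f\big(S\mid S^{\text{c}}\setminus{> S}\big)$.

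Set $\lambda_S:=\gamma(S)|S|/n$, so $\sum_{S\in\mathcal F}\lambda_S=1$. The ("if") direction is a direct computation. Suppose $f$ is modular and $g(x)=cx+d$ on an interval $I$ containing $\{f(S)/|S|:S\in\mathcal F\}$. Modularity with $f(\phi)=0$ gives $f(A)=\sum_{i\in A}f(\{i\})$, so $\sum_S\gamma(S)f(S)=\sum_i f(\{i\})\sum_{S\ni i}\gamma(S)=f([1:n])$ since $\gamma$ is a fractional partition; thus $f([1:n])/n=\sum_S\lambda_S f(S)/|S|$ is a convex combination of points of $I$ and hence lies in $I$, and substituting $g(x)=cx+d$ and distributing the sum gives equality in \eqref{eq: WSasonIneq}. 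For \eqref{eq: WSLBIneq} one notes in addition that modularity forces $f(S\mid S^{\text{c}})=f([1:n])-f(S^{\text{c}})=f(S)$, so the same computation applies with $f(S\mid S^{\text{c}})$ in place of $f(S)$.

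The ("only if") direction is the crux. Assume equality in \eqref{eq: WSasonIneq}. Then both inequalities in the chain $g(f([1:n])/n)\le g\big(\sum_S\lambda_S f(S)/|S|\big)\le\sum_S\lambda_S g\big(f(S)/|S|\big)$ must be tight---the first since $f([1:n])/n\le\sum_S\lambda_S f(S)/|S|$ by fractional subadditivity and $g$ is non-decreasing, the second by Jensen's inequality. Strict monotonicity of $g$ then forces $f([1:n])=\sum_S\gamma(S)f(S)$ (equality in the weak Madiman--Tetali upper bound), and equality in Jensen's inequality, $g$ being non-constant, forces $g$ to be linear on the interval spanned by $\{f(S)/|S|:S\in\mathcal F\}$; this gives the stated condition on $g$. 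It remains to deduce that $f$ is modular. Here my plan would be to re-examine the chain through \eqref{eq: SasonIneq} instead, in which equality in \eqref{eq: WSasonIneq} additionally forces $f(S\mid <S)=f(S)$ for every $S$ with $\gamma(S)>0$ (strict monotonicity of $g$ applied to the termwise inequalities $f(S\mid <S)\le f(S)$) together with equality in the strong Madiman--Tetali upper bound; combining these identities with the telescoping identity $f([1:n])=\sum_i f(\{i\}\mid[1:i-1])$ should force $f([1:n])=\sum_i f(\{i\})$, and a submodular $f$ with $f(\phi)=0$ attaining this is modular, since $f(A)+f(A^{\text{c}})\ge f([1:n])=\sum_i f(\{i\})$ together with the subadditivity bounds $f(A)\le\sum_{i\in A}f(\{i\})$ and $f(A^{\text{c}})\le\sum_{i\in A^{\text{c}}}f(\{i\})$ forces $f(A)=\sum_{i\in A}f(\{i\})$ for all $A$. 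Alternatively, one can simply invoke the equality characterization of the weak Madiman--Tetali inequality from \cite{JakharKCP25}. Part~2) is entirely analogous, working along the chain from \eqref{eq: SLBIneq} with the reverse Jensen inequality for concave $g$, the weak Madiman--Tetali \emph{lower} bound, and $f(S\mid S^{\text{c}})\le f(S\mid S^{\text{c}}\setminus{> S})$. I expect this last step---extracting modularity of $f$ from equality in the weak Madiman--Tetali bound---to be the main obstacle, as it requires showing that the equality case propagates through all the intermediate conditional values.
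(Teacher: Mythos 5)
Your proposal matches the paper's proof essentially step for step: the inequalities are obtained by weakening Theorem~\ref{SasonImprov} via $f(S\mid <S)\le f(S)$ and $f(S\mid S^{\text{c}})\le f(S\mid S^{\text{c}}\setminus{>S})$ together with monotonicity of $g$, and the equality case is reduced to simultaneous equality in Jensen's inequality and in the weak Madiman--Tetali bound, with modularity obtained from the characterization in \cite{JakharKCP25} --- precisely the route the paper takes. Your attempted self-contained derivation of modularity (via the telescoping identity) is not fully closed as written, but since you fall back on invoking \cite[Theorem~4]{JakharKCP25}, which is exactly what the paper does, the argument stands.
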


\begin{remark}\label{StrongHanRemark}
    When $\mathcal{F}=\{\!\!\{S\subseteq [1:n]: |S|=k\}\!\!\}$, $\gamma(S)=1/\binom{n-1}{k-1}$, the inequality $\eqref{eq: SasonIneq}$ yields a bound that is stronger than the corresponding result in \cite[Equation~22)]{Sason22}, while \eqref{eq: WSasonIneq} reduces exactly to it.
\end{remark}

\begin{remark}
    For $f(S)=H(X_S)$, $S\subseteq [1:n]$, and $g(x)=\mathsf{e}^{2x}$, the inequality \eqref{eq: WSasonIneq} recovers an entropy power-type inequality for joint distributions~\cite[Corollary~VI]{MadimanT10} while the stronger inequality \eqref{eq: SasonIneq} yields
    \begin{align}
        \mathsf{e}^{\frac{2H(X_{[1:n]})}{n}} \leq \sum\limits_{S \in \mathcal{F}} \frac{\gamma(S)|S|}{n} \mathsf{e}^{\frac{2H(X_S|X_{< S})}{|S|}},
    \end{align}
    a bound that is stronger than the former.
\end{remark}

\begin{remark}
         There is a duality between the upper and lower bounds in part 2) of Theorem~\ref{thm:MT}, through the corresponding inequality gaps, $\text{Gap}_{\text{L}}(f,\mathcal{F},\gamma) = f([1:n]) - \sum_{S \in \mathcal{F}} \gamma(S)f(S | S^\text{c})$ and $\text{Gap}_{\text{U}}(f,\mathcal{F},\gamma) = \sum\limits_{S \in \mathcal{F}} \gamma(S)f(S)-f([1:n])$~\cite[Theorem~IV]{MadimanT10}, in particular, 
         \begin{align}
             \frac{\text{Gap}_{\text{L}}(f,\mathcal{F},\gamma)}{w(\gamma)}=\frac{\text{Gap}_{\text{U}}(f,\bar{\mathcal{F}},\bar{\gamma})}{w(\bar{\gamma})},
         \end{align}
         where $\bar{\mathcal{F}}=\{\!\!\{S^\text{c}: S\in\mathcal{F}\}\!\!\}$, $w(\gamma)=\sum_{S\in\mathcal{F}}\gamma(S)$, and $\bar{\gamma}(S^\text{c})=\frac{\gamma(S)}{w(\gamma)-1}$. In contrast, Theorem~\ref{SasonImprov} (and even Corollary~\ref{WeakSasonImprov}) does not, in general, admit such a duality due to the presence of the potentially nonlinear function $g$.
\end{remark}

\begin{proof}[Proof Sketch of Corollary~\ref{WeakSasonImprov}]
    The inequalities \eqref{eq: WSasonIneq} and \eqref{eq: WSLBIneq} follow from the fact that conditioning reduces the value of a submodular function, together with the monotonicity of $g$. The equality conditions follow from the equality conditions for Jensen's inequality and those for the weak form of the Madiman and Tetali's inequalities~\cite[Theorem~4]{JakharKCP25}. 
    A detailed proof is provided in
    \if \extended 1%
    Appendix~\ref{appendix:b}.
    \fi
    \if \extended 0%
    \cite[Appendix~C]{JakharKCP26}.
    \fi
\end{proof}
We remark that the results in this section can be particularly useful in scenarios where the quantities of interest and the observable data are convex transformations of normalized values of a submodular function, and where recovering the underlying submodular values is either infeasible or computationally inefficient.

\section{Two Combinatorial Problems}
In this section, we consider two combinatorial problems: a projection inequality of the Loomis-Whitney-type~\cite{LoomisH49,Radhakrishnan03}, and an extremal graph theory problem which is motivated by the formulation introduced in~\cite[Section~5]{Sason22}.
\subsection{A Strong Loomis-Whitney-Type Inequality}

A well-known projection inequality, the Loomis-Whitney inequality~\cite{LoomisH49}, relates the cardinality of a set in $\mathbb{R}^3$ to the cardinalities of its two dimensional coordinate projections. Specifically, for a set of $n$ points in $\mathbb{R}^3$, let $n_{xy}$, $n_{yz}$, and $n_{zx}$ denote the number of distinct points in its projections onto the $xy$-, $yz$-, and $zx$-planes, respectively. Then $n^2\leq n_{xy}n_{yz}n_{zx}$. An elegant entropy-based proof of this inequality was given by Radhakrishnan~\cite{Radhakrishnan03} using Han's inequality, a special case of \eqref{eq:MTIneq}. Here, we use the strong form of the corresponding inequality in \eqref{eq:strongMTIneq} to derive a Loomis-Whitney-type inequality of similar flavour. 

\begin{theorem}\label{StrongLW}
   Let $S\subset \mathbb{R}^3$ be a set of $n$ distinct points. Let $n_{xy}$ and $n_{zx}$ denote the number of distinct points in the projections
   of $S$ onto the $xy$- and $zx$-planes, respectively. Define 
   \begin{align}
       n^*_{yz|x}=\max_{x\in\mathbb{R}}\left|\{(y,z)\in\mathbb{R}^2:(x,y,z)\in S\}\right|,
   \end{align}
   that is, the maximum number of distinct projection points onto the $yz$-plane among all points of $S$ having the same $x$-coordinate. Then
   \begin{align}\label{eq:StrongLW}
       n^2\leq n_{xy}n^*_{yz|x}n_{zx}.
   \end{align}
\end{theorem}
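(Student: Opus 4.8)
The plan is to run the entropy method, but with the strong form of the Madiman--Tetali inequality \eqref{eq:strongMTIneq} playing the role that Han's inequality plays in Radhakrishnan's proof of classical Loomis--Whitney. Let $(X,Y,Z)$ be uniformly distributed over the point set $S$, and identify the coordinate axes with the ground set $\{1,2,3\}$ via $1\leftrightarrow x$, $2\leftrightarrow y$, $3\leftrightarrow z$. Put $f(A)=H(X_A)$ for $A\subseteq\{1,2,3\}$. Then $f$ is submodular with $f(\phi)=0$, and $f(\{1,2,3\})=H(X,Y,Z)=\log n$ since the distribution is uniform on an $n$-point set.

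Next I would invoke the upper bound in \eqref{eq:strongMTIneq} with the family $\mathcal{F}=\{\!\!\{\{1,2\},\{1,3\},\{2,3\}\}\!\!\}$ and the fractional partition $\gamma\equiv\tfrac12$ (each element of $\{1,2,3\}$ lies in exactly two members of $\mathcal{F}$, so this is indeed a fractional partition). The key computation is of the predecessor sets: ${<}\{1,2\}={<}\{1,3\}=\phi$ because no index is smaller than $1$, whereas ${<}\{2,3\}=\{1\}$. Hence \eqref{eq:strongMTIneq} specializes to
\begin{align*}
  H(X,Y,Z) &\leq \tfrac12\big(\, f(\{1,2\}) + f(\{1,3\}) + f(\{2,3\}\mid\{1\}) \,\big)\\
  &= \tfrac12\big(\, H(X,Y) + H(X,Z) + H(Y,Z\mid X) \,\big),
\end{align*}
that is, $2\log n \leq H(X,Y) + H(X,Z) + H(Y,Z\mid X)$.

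It then remains to bound each of the three terms by the appropriate combinatorial quantity. The bounds $H(X,Y)\leq\log n_{xy}$ and $H(X,Z)\leq\log n_{zx}$ are immediate, since $(X,Y)$ and $(X,Z)$ are supported on the $xy$- and $zx$-projections of $S$. For the conditional term, for each value $x$ attained by $X$ the pair $(Y,Z)$ conditioned on $X=x$ is supported on the $x$-slice $\{(y,z):(x,y,z)\in S\}$, whose cardinality is at most $n^*_{yz|x}$ by definition; therefore $H(Y,Z\mid X=x)\leq\log n^*_{yz|x}$ for every such $x$, and averaging over $X$ preserves the bound, giving $H(Y,Z\mid X)\leq\log n^*_{yz|x}$. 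Substituting the three bounds and exponentiating yields $n^2\leq n_{xy}\,n^*_{yz|x}\,n_{zx}$.

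The only place this argument departs from the classical one — and the main point requiring care — is the choice of labelling of the coordinates: it must be arranged so that, among the three members of $\mathcal{F}$, the unique set with a nonempty predecessor is exactly $\{y,z\}$ with predecessor $\{x\}$. That is precisely what turns one of the three projection entropies into the conditional entropy $H(Y,Z\mid X)$, which is controlled by the slice maximum $n^*_{yz|x}$ rather than by the full projection count $n_{yz}$; since every $x$-slice is contained in the $yz$-projection we have $n^*_{yz|x}\leq n_{yz}$, so the bound is never weaker than the classical Loomis--Whitney inequality. An entirely analogous argument on the $d$-element ground set, with the family of all $(d-1)$-subsets and $\gamma\equiv\tfrac1{d-1}$, gives Theorem~\ref{StrongLW-ddim}.
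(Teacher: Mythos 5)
Your proposal is correct and follows essentially the same route as the paper's own proof in Appendix~\ref{appendix:c}: the same uniform random point, the same family $\mathcal{F}=\{\!\!\{\{1,2\},\{2,3\},\{3,1\}\}\!\!\}$ with $\gamma\equiv\tfrac12$ in the strong form of \eqref{eq:strongMTIneq}, and the same term-by-term bounds, with the conditional entropy $H(Y,Z\mid X)$ controlled by the slice maximum $n^*_{yz|x}$. Your explicit computation of the predecessor sets and the remark on why the labelling matters are accurate and match the paper's reasoning.
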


\begin{remark}
    The inequality in \eqref{eq:StrongLW} strengthens the bound $n_{xy}n_{yz}n_{zx}$ in Loomis-Whitney inequality~\cite{LoomisH49,Radhakrishnan03} by replacing the cardinality of the global projection onto the $yz$-plane with the maximum cardinality of a slice at a fixed $x$-coordinate, noting that $n^*_{yz|x}\leq n_{yz}$. This strengthening relies on additional slice-level structural information beyond coordinate projections, inherited from the strong form of the underlying entropy inequality.
\end{remark}
A detailed proof of Theorem~\ref{StrongLW} is provided in 
 \if \extended 1%
    Appendix~\ref{appendix:c}.
    \fi
    \if \extended 0%
    \cite[Appendix~D]{JakharKCP26}.
    \fi
\begin{example}
    Consider the set of points in $\mathbb{R}^3$, 
    \begin{align*}
        S = \{(1,1,2),(1,2,2),(2,1,2),(2,3,2),(2,2,1),(2,2,3)\}.
    \end{align*}
    The number of distinct projection points onto the coordinate planes are  $n_{xy} = 5$, $n_{yz} = 5$, and $n_{zx} = 4$. Consequently, the Loomis-Whitney inequality yields the bound 
    \begin{align}
    n^2\leq 5\times5\times4 = 100.
    \end{align}
    On the other hand, the maximum of distinct projection points onto the $yz$-plane among all points of $S$ having the same $x$-coordinate is $n^*_{yz|x}=4$, attained by the slice corresponding to $x=2$. Therefore, the bound in \eqref{eq:StrongLW} becomes 
    \begin{align}
        n^2\leq 5\times4\times4=80,
    \end{align}
    which is strictly smaller than the Loomis-Whitney bound. The comparison is illustrated in the following figure.
\end{example}

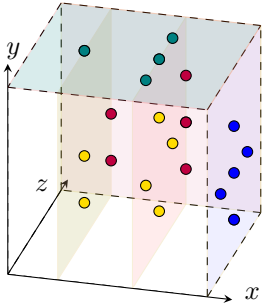
\begin{figure}[H]
\centering
    \begin{tikzpicture}
    \begin{axis}[
        view={15}{25},
        axis equal image,
        axis lines=center,
        xmin=0, xmax=4.5,
        ymin=0, ymax=4.5,
        zmin=0, zmax=4.5,
        grid=both,
        grid style={draw=gray!30},
        set layers,
        axis on top=false,
        axis background/.style={fill=none},
        x axis plane/.style={fill=none},
        y axis plane/.style={fill=none},
        z axis plane/.style={fill=none},
        xlabel={$x$},
        ylabel={$y$},
        zlabel={$z$},
        xtick=\empty,
        ytick=\empty,
        ztick=\empty,
        xlabel style={at={(axis description cs:0.8,0.02)}, anchor=west},
        ylabel style={at={(axis description cs:0.02,0.7)}, anchor=south},
        zlabel style={at={(axis description cs:0.12,0.3)}, anchor=south}
    ]
    
    \addplot3[black, dashed] coordinates {(0,0,0)(4,0,0)};
    \addplot3[black, dashed] coordinates {(4,0,0)(4,4,0)};
    \addplot3[black, dashed] coordinates {(4,4,0)(0,4,0)};
    \addplot3[black, dashed] coordinates {(0,4,0)(0,0,0)};
    
    \addplot3[black, dashed] coordinates {(0,0,4)(4,0,4)};
    \addplot3[black, dashed] coordinates {(4,0,4)(4,4,4)};
    \addplot3[black, dashed] coordinates {(4,4,4)(0,4,4)};
    \addplot3[black, dashed] coordinates {(0,4,4)(0,0,4)};
    
    \addplot3[black, dashed] coordinates {(0,0,0)(0,0,4)};
    \addplot3[black, dashed] coordinates {(4,0,0)(4,0,4)};
    \addplot3[black, dashed] coordinates {(4,4,0)(4,4,4)};
    \addplot3[black, dashed] coordinates {(0,4,0)(0,4,4)};

    \addplot3[
        patch,patch type=rectangle,
        fill=olive!85,
        fill opacity=0.15,
        opacity=0.15,
        draw=none
    ]
    coordinates {(1,0,0)(1,4,0)(1,4,4)(1,0,4)};
    
    \addplot3[
        patch,patch type=rectangle,
        fill=red!50,
        fill opacity=0.15,
        opacity=0.15,
        draw=none
    ]
    coordinates {(2.5,0,0)(2.5,4,0)(2.5,4,4)(2.5,0,4)};

    \addplot3[
        patch,patch type=rectangle,
        fill=blue!40,
        fill opacity=0.15,
        opacity=0.15,
        draw=none
    ]
    coordinates {(4,0,0)(4,4,0)(4,4,4)(4,0,4)};
    
    \addplot3[
        patch,patch type=rectangle,
        fill=purple!40,
        fill opacity=0.15,
        opacity=0.15,
        draw=none
    ]
    coordinates {(0,4,0)(4,4,0)(4,4,4)(0,4,4)};
    
    \addplot3[
        patch,patch type=rectangle,
        fill=teal,
        fill opacity=0.15,
        opacity=0.15,
        draw=none
    ]
    coordinates {(0,0,4)(4,0,4)(4,4,4)(0,4,4)};

    \addplot3[
        only marks,
        mark=*,
        mark size=2,
        draw=black,
        fill=yellow!80!orange
    ]
    coordinates {
        (1,2,0.75)
        (1,2,1.75)
        (2.5,2,2.75)
        (2.5,2,0.75)
        (2.5,1,1.75)
        (2.5,3,1.75)
    };

    \addplot3[only marks,mark=*,mark size=2,draw=black,fill=blue]
    coordinates {
        (4,2,0.75)
        (4,2,1.75)
        (4,2,2.75)
        (4,2,0.75)
        (4,1,1.75)
        (4,3,1.75)
    };
    
    \addplot3[only marks,mark=*,mark size=2,draw=black,fill=purple]
    coordinates {
        (1,4,0.75)
        (1,4,1.75)
        (2.5,4,2.75)
        (2.5,4,0.75)
        (2.5,4,1.75)
        (2.5,4,1.75)
    };
    
    \addplot3[only marks,mark=*,mark size=2,draw=black,fill=teal]
    coordinates {
        (1,2,4)
        (1,2,4)
        (2.5,2,4)
        (2.5,2,4)
        (2.5,1,4)
        (2.5,3,4)
    };
    \end{axis}
    \end{tikzpicture}
\caption{Illustration of the comparison between the classical Loomis–Whitney bound and the strong Loomis–Whitney–type bound in $\mathbb{R}^3$. The yellow points represent the set $S$. The red, blue, and green points denote the projections of $S$ onto the $xy$-, $yz$-, and $zx$-planes, respectively. The slice at $x=2$ has the largest number of distinct $yz$-projections, leading to a tighter bound than the Loomis-Whitney bound.}
\end{figure}

Using the same approach, we obtain a $d$-dimensional analogue of the above strong Loomis-Whitney-type inequality.

\begin{theorem}\label{StrongLW-ddim}

 Let $S \subset \mathbb{R}^d$ be a set of n distinct points. For $i\in[2:d]$, let $n_{-i}$ denote the number of distinct points in the projection of $S$ on to the $(d-1)$-dimensional hyperplane obtained by deleting the $i^{\text{th}}$ coordinate. Define 
 \begin{align}
        n^*_{-1|x_1} = \max_{x_1\in\mathbb{R}}\big|\{(x_2, \ldots, x_d):(x_1,x_2, \ldots, x_d)\in S\}\big|,
    \end{align}
    that is, the maximum number of distinct projection points onto the $(d-1)$-dimensional coordinate hyperplane obtained by deleting the $x_1$-coordinate among all points of $S$ having the same $x_1$-coordinate. Then
    \begin{align}\label{eq:StrongLW-ddim}
        n^{d-1}\leq \left(\prod_{i=2}^{d}n_{-i}\right)n^*_{-1|x_1}.
    \end{align}
    \end{theorem}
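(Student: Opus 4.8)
The plan is to adapt Radhakrishnan's entropy proof of the Loomis--Whitney inequality, replacing Han's inequality with the \emph{strong} upper bound of \eqref{eq:strongMTIneq}, so that exactly one projection term acquires a conditioning that turns it into a slice quantity. I would draw a point $(X_1,\dots,X_d)$ uniformly at random from $S$; since $|S|=n$ we have $H(X_{[1:d]})=\log n$, and the set function $f(A):=H(X_A)$ on $2^{[1:d]}$ is submodular with $f(\phi)=0$. Take the family $\mathcal{F}=\{\!\!\{[1:d]\setminus\{i\}:i\in[1:d]\}\!\!\}$ with weights $\gamma(S)=\tfrac{1}{d-1}$. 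Every coordinate $j\in[1:d]$ lies in exactly $d-1$ members of $\mathcal{F}$, so $\gamma$ is a fractional partition and Theorem~\ref{thm:MT} applies (with its $n$ equal to $d$); note that no monotonicity hypothesis on $f$ is needed, since we use the fractional-partition form of the bound.

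Next I would compute the conditioning sets $<S$ that appear in the upper bound of \eqref{eq:strongMTIneq}, relative to the natural order $1<2<\dots<d$. For $S=[1:d]\setminus\{i\}$ with $i\in[2:d]$, the least index of $S$ is $1$, hence $<S=\phi$ and $f(S\mid<S)=H\big(X_{[1:d]\setminus\{i\}}\big)$. For the one remaining set $S=[2:d]$, the least index is $2$, hence $<S=\{1\}$ and $f(S\mid<S)=H\big(X_{[2:d]}\mid X_1\big)$. Substituting into the upper bound of \eqref{eq:strongMTIneq} and multiplying by $d-1$ gives
\[
(d-1)\log n \;\le\; \sum_{i=2}^{d} H\big(X_{[1:d]\setminus\{i\}}\big) \;+\; H\big(X_{[2:d]}\mid X_1\big).
\]

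It then remains to bound each right-hand term by the logarithm of a cardinality, which is routine. The support of $X_{[1:d]\setminus\{i\}}$ is contained in the projection of $S$ obtained by deleting the $i$th coordinate, so $H\big(X_{[1:d]\setminus\{i\}}\big)\le\log n_{-i}$. For the conditional term, for every value $x_1$ the conditional support of $X_{[2:d]}$ given $X_1=x_1$ has size at most $n^*_{-1|x_1}$, and $H\big(X_{[2:d]}\mid X_1\big)$ is an average of the quantities $H\big(X_{[2:d]}\mid X_1=x_1\big)$, so it is at most $\log n^*_{-1|x_1}$. Substituting and exponentiating yields \eqref{eq:StrongLW-ddim}, and specializing to $d=3$ with $(x,y,z)=(X_1,X_2,X_3)$ recovers Theorem~\ref{StrongLW}.

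I do not expect a genuine obstacle; the one point that needs care is the asymmetry of the strong Madiman--Tetali bound, which conditions only the block $[2:d]$ while leaving the other $d-1$ blocks unconditioned. One must check that this single conditioned block is precisely the one whose entropy is governed by the slice count $n^*_{-1|x_1}$ rather than by a full $(d-1)$-dimensional projection count --- which is exactly what the computations $<([2:d])=\{1\}$ and $<([1:d]\setminus\{i\})=\phi$ for $i\ge 2$ confirm. A parallel argument using the lower bound in \eqref{eq:strongMTIneq}, or the fractional-covering variant of Theorem~\ref{thm:MT}, could presumably yield companion inequalities, but these are not needed for the stated result.
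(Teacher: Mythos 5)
Your proposal is correct and follows essentially the same route as the paper's proof: a uniformly random point of $S$, the strong Madiman--Tetali upper bound with $\mathcal{F}=\{\!\!\{[1:d]\setminus\{i\}:i\in[1:d]\}\!\!\}$ and $\gamma\equiv 1/(d-1)$, the observation that $<\!S=\phi$ for all members except $[2:d]$ (for which $<\!S=\{1\}$), and the standard cardinality bounds on the resulting unconditional and conditional entropies. Your explicit computation of the conditioning sets is in fact slightly cleaner than the paper's appendix, which contains a notational slip ($n^*_{-d|x_d}$ versus $n^*_{-1|x_1}$) at the analogous step.
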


\begin{proof}[Proof Sketch of Theorem~\ref{StrongLW-ddim}]
    We choose a point $P=(X_1, \ldots, X_d)$ uniformly at random from $S$. We apply the upper bound in inequality \eqref{eq:strongMTIneq} to $H(X_1, \ldots, X_d)$ by taking $f(S) = H(X_S), S \subseteq [1:n]$, $\mathcal{F} = \{\!\!\{[1:d]\setminus{\{i\}}: i \in [1:d]\}\!\!\}$ and $\gamma(S) = 1/(d-1), ~\forall S \in \mathcal{F}$. Since $P$ is uniformly distributed over $S$, we have $H(X_1, \ldots, X_d)=\log{n}$. Using the structural information in the theorem statement to upper bound each entropy term in the upper bound in \eqref{eq:strongMTIneq}, we get \eqref{eq:StrongLW-ddim}. A detailed proof is provided in
    \if \extended 1%
    Appendix~\ref{appendix:d}.
    \fi
    \if \extended 0%
    \cite[Appendix~E]{JakharKCP26}.
    \fi
\end{proof}
The inequality in \eqref{eq:StrongLW-ddim} strengthens the bound in Loomis-Whitney bound in $d$ dimensions~\cite{LoomisH49},\cite[Theorem~3.4]{Galvin14}. Notice that we can obtain different inequalities corresponding to \eqref{eq:StrongLW-ddim} depending on the coordinate for which we have the additional slice-level structural information, using a different ordering of the coordinates. This is a consequence of the fact that the strong form of Han's inequality holds for any ordering of the ground set.

\subsection{An Extremal Graph Theory Problem}
We now present an application of Shearer's lemma to an extremal graph theory problem. Specifically, we are interested in estimating the number of edges in an undirected graph $G$ with vertex set $V(G)\subseteq\{-1,1\}^n$ and edge set $E(G)$ defined as follows. For distinct vertices $x^n,y^n$, the edge $\{x^n,y^n\}$ belongs to $E(G)$ if and only if the set of coordinates in which $x^n$ and $y^n$ differ belongs to a prescribed family $\mathcal{F}$ of subsets of $[1:n]$, i,e.,
\begin{align}\label{edgedefn}
    \{x^n,y^n\}\in E(G) \Leftrightarrow \Delta(x^n,y^n)\in \mathcal{F},
\end{align}
where $\Delta(x^n,y^n)=\{i\in[1:n]: x_i\neq y_i\}$.

Special cases of this setup have been studied previously \cite{BoucheronLM13,Sason22}. In particular, choosing $\mathcal{F}=\{\!\!\{ S\subseteq [1:n]: |S|\leq \tau\}\!\!\}$ recovers the problem studied by Sason~\cite{Sason22}, while the choice $\mathcal{F}=\{\!\!\{S\subseteq [1:n]: |S|=1\}\!\!\}$ corresponds to the problem considered by Boucheron \emph{et al.}~\cite{BoucheronLM13}. 

Building on the confusability graph viewpoint for the model in \cite{Sason22}, our setup admits a similar but a more general interpretation. Consider a noisy channel with binary $n$-length sequences as inputs. Two sequences $x^n,y^n\in V(G)$ may produce a common channel output with positive probability if and only if $\Delta(x^n,y^n)\in\mathcal{F}$. Note that in the models considered in \cite{BoucheronLM13} and \cite{Sason22}, confusability depends only on the Hamming distances due to the specific choices of $\mathcal{F}$, whereas in our setting it potentially depends on the full pattern of coordinates in which they differ. This distinction is reflected in our analysis through the use of Shearer’s lemma, rather than Han’s inequality, which suffices set families consisting of all sets of equal sizes as considered in \cite{BoucheronLM13} and \cite{Sason22}.

We derive an upper bound on the number of edges, $|E(G)|$, in terms of the vertex set $V(G)$. To this end, we need the following definitions. For any $x^n = (x_1,\ldots,x_n) \in \{-1,1\}^n, d \in [1:n]$ and $S = \{i_1, \ldots, i_d\} \subseteq [1:n]$ such that $i_1 < \ldots < i_d$, let $\Tilde{x}^{(S)}$ denote an $n-d$ length vector obtained by dropping the bits corresponding to positions $i_1,\ldots,i_d$ in $x^n$, i.e.,
\begin{align}
    \Tilde{x}^{(S)} := (x_1,\ldots,x_{i_1-1},x_{i_1+1},\ldots,x_{i_d-1},x_{i_d+1},\ldots,x_n),
\end{align}
and let $\Bar{x}^{(S)}$ denote an $n$ length vector obtained by flipping the bits corresponding to positions $i_1,\ldots,i_d$ in $x^n$, i.e., $\Bar{x}^{(S)}_j = -x_j$, for $j \in S$ and $\Bar{x}^{(S)}_j = x_j$ otherwise. Further, for $d \in [1:n]$, we define two integers $m_d$ and $\ell_d$ as follows.
\begin{align}\label{eq: m_d}
    m_d := \hspace{-0.2em}\min\limits_{\substack{x^n \in V(G),\\S \in \mathcal{F}:\\ |S| = d}} \left|\left\{ y^n \in V(G) : \Tilde{y}^{(S)} = \Tilde{x}^{(S)}, \Bar{x}^{(S)} \in V(G) \right\}\right|,    
\end{align}
\begin{align}\label{eq: l_d}
    \ell_d := \hspace{-0.2em} \min\limits_{\substack{x^n \in V(G),\\S \in \mathcal{F}:\\ |S| = d}} \left|\left\{ y^n \in V(G) : \Tilde{y}^{(S)} = \Tilde{x}^{(S)}, \Bar{x}^{(S)} \notin V(G) \right\}\right|. 
\end{align}
By definitions, $m_d$ and $\ell_d$ satisfy $m_d \in [2: \min{\{2^{d},|V(G)|\}}]$ and $\ell_d \in [1: \min{\{2^{d}-1,|V(G)|-1\}}]$ respectively.
    
\begin{theorem}\label{ExtremalGraph}
    Let $G$ be a simple undirected graph with set of vertices $V(G) \subseteq \{-1,1\}^n, n \in \mathbb{N}$ and set of edges $E(G)$ connecting pairs of vertices if and only if the set of coordinates in which the vertices differ is an element in the family $\mathcal{F}$ of subsets of $[1:n]$. Let $r = \max_{S \in \mathcal{F}} |S|$. For $d \in [1:r]$, let $\mathcal{F}_d$ denote the collection of sets $S \in \mathcal{F}$ such that $|S| = d$, and let integers $m_d$ and $\ell_d$ be as defined in \eqref{eq: m_d} and \eqref{eq: l_d} respectively, then the following holds:
    \begin{align}
        |E(G)| \leq \sum\limits_{d = 1}^r\frac{|V(G)|\Big( (|\mathcal{F}_d| - t_d) \log{|V(G)|} - |\mathcal{F}_d| \log{\ell_d}\Big)}{2\log{\frac{m_d}{\ell_d}}}, \label{eq: EGBound}
    \end{align}
    where $t_d$ denotes the maximum integer such that each $i \in [1:n]$ appears in at least $t_d$ members of $\Bar{\mathcal{F}_d} = \{\!\!\{S^\emph{c}: S\in\mathcal{F}_d\}\!\!\}$.
\end{theorem}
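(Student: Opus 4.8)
The plan is to prove \eqref{eq: EGBound} one degree at a time. Since every edge of $G$ has a nonempty difference set lying in $\mathcal{F}$, we have $|E(G)|=\sum_{d=1}^{r}|E_d|$, where $E_d$ collects the edges whose difference set has size $d$; writing $E_S=\{\{x^n,y^n\}\in E(G):\Delta(x^n,y^n)=S\}$ gives $|E_d|\le e_d:=\sum_{S\in\mathcal{F}_d}|E_S|$ (with equality unless $\mathcal{F}$ has repeated members), so it suffices to bound each $e_d$ by the $d$-th summand of \eqref{eq: EGBound}. First I would record some bookkeeping. Call $\{v^n\in V(G):\Tilde{v}^{(S)}=\Tilde{x}^{(S)}\}$ the \emph{$S$-fiber} of $x^n$; because $v^n\mapsto\Bar{v}^{(S)}$ is a fixed-point-free involution of that fiber, the ``paired'' vertices $x^n$ (those with $\Bar{x}^{(S)}\in V(G)$) split into pairs, each pair being exactly one edge of $E_S$. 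Hence $\sum_{S\in\mathcal{F}_d}\big|\{x^n\in V(G):\Bar{x}^{(S)}\in V(G)\}\big|=2e_d$ and, complementarily, $\sum_{S\in\mathcal{F}_d}\big|\{x^n\in V(G):\Bar{x}^{(S)}\notin V(G)\}\big|=|\mathcal{F}_d|\,|V(G)|-2e_d$.

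Next, fix $d$ and let $Z^n$ be uniform on $V(G)$, so $H(Z^n)=\log|V(G)|$, and for each $S\in\mathcal{F}_d$ apply the chain rule $H(Z^n)=H(Z_{S^{\text{c}}})+H(Z_S\mid Z_{S^{\text{c}}})$, where $Z_S$ is the subvector indexed by $S$. Summing over $S\in\mathcal{F}_d$ and invoking Shearer's lemma --- the instance of the upper bound in \eqref{eq:MTIneq} with fractional covering $\alpha\equiv 1/t_d$ on the family $\Bar{\mathcal{F}_d}=\{\!\!\{S^{\text{c}}:S\in\mathcal{F}_d\}\!\!\}$, legitimate since entropy is monotone and each coordinate lies in at least $t_d$ members of $\Bar{\mathcal{F}_d}$ --- I obtain $\sum_{S\in\mathcal{F}_d}H(Z_{S^{\text{c}}})\ge t_d\,H(Z^n)$, hence $\sum_{S\in\mathcal{F}_d}H(Z_S\mid Z_{S^{\text{c}}})\le(|\mathcal{F}_d|-t_d)\log|V(G)|$. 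For the reverse estimate: conditioned on $Z_{S^{\text{c}}}=\Tilde{x}^{(S)}$, $Z^n$ is uniform on the $S$-fiber of $x^n$, so $|V(G)|\,H(Z_S\mid Z_{S^{\text{c}}})=\sum_{x^n\in V(G)}\log\big|\text{$S$-fiber of }x^n\big|$. A paired $x^n$ has an $S$-fiber containing both $x^n$ and $\Bar{x}^{(S)}$, hence of size at least $m_d$ by \eqref{eq: m_d}; an unpaired $x^n$ has $S$-fiber of size at least $\ell_d$ by \eqref{eq: l_d}. Combined with the two counts above, $\sum_{S\in\mathcal{F}_d}|V(G)|\,H(Z_S\mid Z_{S^{\text{c}}})\ge 2e_d\log m_d+(|\mathcal{F}_d|\,|V(G)|-2e_d)\log\ell_d$.

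Comparing the two bounds on $\sum_{S\in\mathcal{F}_d}H(Z_S\mid Z_{S^{\text{c}}})$ gives $2e_d(\log m_d-\log\ell_d)\le|V(G)|\big((|\mathcal{F}_d|-t_d)\log|V(G)|-|\mathcal{F}_d|\log\ell_d\big)$; dividing by $2\log(m_d/\ell_d)$ and summing over $d\in[1:r]$ yields \eqref{eq: EGBound}. I anticipate two delicate points. One is the bookkeeping that converts paired vertices into edges and that identifies exactly over which configurations $m_d,\ell_d$ are minimised, so that the fiber-size estimates are applied to the right vertices; the multiset case needs only the harmless step $|E_d|\le e_d$. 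The other is that the final division is valid only when $m_d>\ell_d$, i.e.\ $\log(m_d/\ell_d)>0$; this is easily checked in the settings of interest --- for $d=1$ it is $2>1$, where the bound specialises to the hypercube edge-isoperimetric inequality $|E(G)|\le\frac12|V(G)|\log_2|V(G)|$ of \cite{BoucheronLM13}, and taking $\mathcal{F}=\{\!\!\{S\subseteq[1:n]:|S|\le\tau\}\!\!\}$ recovers \cite[Theorem~2]{Sason22}. It is worth emphasising that the argument uses Shearer's lemma on the \emph{complement} family $\Bar{\mathcal{F}_d}$, which is exactly what lets it accommodate an arbitrary $\mathcal{F}$, unlike the Han's-inequality route of \cite{Sason22} limited to families of equal-size sets.
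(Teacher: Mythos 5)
Your proof is correct and follows essentially the same route as the paper's: a uniform random vertex on $V(G)$, the chain rule $H(X^n)-H(\tilde{X}^{(S)})=H(X_S\mid \tilde{X}^{(S)})$, Shearer's lemma applied to the complement family $\bar{\mathcal{F}}_d$ for the upper bound, and the paired/unpaired split with $m_d,\ell_d$ for the lower bound, followed by the same algebra. The only differences are cosmetic (you phrase the lower bound via fiber sizes rather than conditional probabilities) together with two points of extra care --- the multiset bookkeeping $|E_d|\le e_d$ and the observation that the final division requires $m_d>\ell_d$ --- the latter being an implicit assumption in the paper's own proof as well.
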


\begin{remark}\label{BLMSasonRemark}
    Theorem~\ref{ExtremalGraph} recovers the bounds obtained in \cite[Theorem~4.2]{BoucheronLM13} and \cite[Theorem~2]{Sason22} for the choices $\mathcal{F} = \{\!\!\{ \{i\}: i \in [1:n] \}\!\!\}$ and $\mathcal{F}=\{\!\!\{ S \subseteq [1:n]: |S| \leq \tau \}\!\!\}$, respectively. See
     \if \extended 1%
    Appendix~\ref{appendix:f}
    \fi
    \if \extended 0%
    \cite[Appendix~G]{JakharKCP26}
    \fi
    for details.
\end{remark}

\begin{proof}[Proof Sketch of Theorem~\ref{ExtremalGraph}]
    We choose a vertex $X^n = (X_1, \ldots, X_n)$ uniformly at random from $V(G)$. Noting that for any $x^n \in V(G)$, either $\Bar{x}^{(S)} \in V(G)$ or $\Bar{x}^{(S)} \notin V(G)$, we obtain upper bounds on $P_{X_S|\Tilde{X}^{(S)}}(x_S|\Tilde{x}^{(S)})$ under both the cases as $\frac{1}{m_d}$ and $\frac{1}{\ell_d}$, respectively. Using these bounds, we obtain a lower bound on $H(X_S|\Tilde{X}^{(S)}) = H(X^n)-H(\Tilde{X}^{(S)})$ and consequently on $\sum_{S \in \mathcal{F}_d} \left(H(X^n)-H(\Tilde{X}^{(S)})\right)$ as
    \begin{align}
        &\sum\limits_{S \in \mathcal{F}_d} \left(H(X^n) - H(\Tilde{X}^{(S)})\right) \nonumber\\
        & \geq \frac{\log{m_d}}{|V(G)|}2|E_{\mathcal{F}_d}(G)| + \frac{\log{\ell_d}}{|V(G)|} \big(|\mathcal{F}_d||V(G)| - 2|E_{\mathcal{F}_d}(G)|\big), \label{eq: EGPS_2}
    \end{align}
    where $E_{\mathcal{F}_d}(G)$ denotes the set of edges connecting pairs of vertices $x^n,y^n\in V(G)$ such that $\Delta(x^n,y^n)\in \mathcal{F}_d$. Finally, we obtain an upper bound as
    \begin{align}
       \! \!\!\sum_{S \in \mathcal{F}_d} \!\left(H(X^n) - H(\Tilde{X}^{(S)})\right) &=|\mathcal{F}_d|H(X^n)-\!\sum_{S^\text{c} \in \Bar{\mathcal{F}}_d}\!H(\Tilde{X}^{(S)})\\
        &\leq (|\mathcal{F}_d| - t_d)H(X^n)\label{eqn:Shearerapl}\\
        & = (|\mathcal{F}_d| - t_d) \log{|V(G)|}, \label{eq: EGPS_3}
    \end{align}
    where \eqref{eqn:Shearerapl} follows by an application of Shearer's lemma for the family $\Bar{\mathcal{F}_d} = \{\!\!\{S^\emph{c}: S\in\mathcal{F}_d\}\!\!\}$, and \eqref{eq: EGPS_3} follows from the fact that $X^n$ is uniformly distributed over $V(G)$. Note that although $\Bar{\mathcal{F}_d}$ consists of sets all having the same cardinality $n-d$, it does not contain all possible subsets of that size due to the arbitrary structure of $\mathcal{F}$. So, Han's inequality for families consisting of all sets of fixed size is not applicable here.
     Finally, we put together \eqref{eq: EGPS_2} and \eqref{eq: EGPS_3} to obtain
    \begin{align}
        |E_{\mathcal{F}_d}(G)| \leq \frac{|V(G)|\Big( (|\mathcal{F}_d| - t_d) \log{|V(G)|} - |\mathcal{F}_d| \log{\ell_d}\Big)}{2\log{\frac{m_d}{\ell_d}}}. \label{eq: EGPS_4}
    \end{align}
    Consequently, summing \eqref{eq: EGPS_4} over $d \in [1:r]$ on both the sides of \eqref{eq: EGPS_4} gives an upper bound on $|E(G)|$ given in \eqref{eq: EGBound}, since $|E(G)| = \sum_{d = 1}^r |E_{\mathcal{F}_d}(G)|$ by the definitions of $E(G)$ and $E_{\mathcal{F}_d}(G)$. A detailed proof is provided in
    \if \extended 1%
    Appendix~\ref{appendix:e}.
    \fi
    \if \extended 0%
    \cite[Appendix~F]{JakharKCP26}.
    \fi
\end{proof}

\begin{example}
     Let $\mathcal{F} = \{\!\!\{ \{1\}, \{2\}, \{3\}, \{4\}, \{5\}, \{1,3\}, \{1,4\},\\ \{3,4\}, \{2,5\}, \{1,3,4\} \}\!\!\}$. Consider a graph $G$ with vertex set $V(G) \subseteq \{-1,1\}^5$ and edge set $E(G)$ defined according to \eqref{edgedefn} with this choice of $\mathcal{F}$. We have
    \begin{align}
    |\mathcal{F}_1| = 5, t_1 = 4;|\mathcal{F}_2| = 4, t_2 = 2;|\mathcal{F}_3| = 1, t_3 = 0.
    \end{align}
    We use the minimum admissible values of $m_d = 2$ and $\ell_d = 1$ to evaluate the upper bound in \eqref{eq: EGBound},
    \begin{align}
       |E(G)| & = \sum\limits_{d = 1}^3\frac{|V(G)|\Big( (|\mathcal{F}_d| - t_d) \log{|V(G)|}\Big)}{2\log{2}} \\
        & = \frac{1}{2}|V(G)|\log{|V(G)|}(5 - 4) \nonumber \\
        & \quad+ \frac{1}{2}|V(G)|\log{|V(G)|}(4 - 2) \nonumber \\
        & \quad + \frac{1}{2}|V(G)|\log{|V(G)|}(1 - 0) \\
        & = 2|V(G)|\log{|V(G)|}.
    \end{align}
\end{example}

\if \extended 1
\IEEEtriggeratref{14}
\fi
\if \extended 0
\IEEEtriggeratref{15}
\fi

\bibliographystyle{IEEEtran}
\bibliography{bibliofile}

\clearpage

\if \extended 1

\appendices
\section{Proof of Theorem~\ref{SasonImprov}}\label{appendix:a}
\emph{Proof of Part 1).} Using the upper bound on $f([1:n])$ from the strong form in Theorem~\ref{thm:MT}, we have that
\begin{align}
    f([1:n]) \leq \sum\limits_{S \in \mathcal{F}} \gamma(S) f(S | < S). \label{eq: si_1.1}
\end{align}
Dividing both the sides by $n$, we get
\begin{align}
    \frac{f([1:n])}{n} & \leq \frac{1}{n}\sum\limits_{S \in \mathcal{F}} \gamma(S) f(S | < S) \\
    & = \sum\limits_{S \in \mathcal{F}} \gamma(S) \frac{|S|}{n} \frac{f(S | < S)}{|S|},
\end{align}
and, since $g$ is monotonically non-decreasing,
\begin{align}
    g\bigg( \frac{f([1:n])}{n} \bigg) & \leq g\bigg( \sum\limits_{S \in \mathcal{F}} \gamma(S) \frac{|S|}{n} \frac{f(S | < S)}{|S|} \bigg) \label{eq: si_1.2}.
\end{align}
Notice that 
\begin{align}
    \sum\limits_{S \in \mathcal{F}} \frac{\gamma(S)|S|}{n} & = \sum\limits_{S \in \mathcal{F}} \sum\limits_{i \in S} \frac{\gamma(S)}{n} \\
    & = \sum\limits_{i = 1}^n \sum\limits_{\substack{S \in \mathcal{F}: \\ i \in S}} \frac{\gamma(S)}{n} \label{eq: si_1.4} \\
    & = \sum\limits_{i = 1}^n \frac{1}{n} = 1, \label{eq: si_1.5}
\end{align}
where \eqref{eq: si_1.4} follows by interchanging the summations and \eqref{eq: si_1.5} holds because $\gamma$ is a fractional partition, i.e., $\sum\limits_{\substack{S \in \mathcal{F}: \\ i \in S}} \gamma(S) = 1$. Now, since $g$ is convex, Jensen's inequality applied to \eqref{eq: si_1.2} implies that
\begin{align}
    g\bigg( \frac{f([1:n])}{n} \bigg) \leq \sum\limits_{S \in \mathcal{F}} \frac{\gamma(S)|S|}{n} g\bigg( \frac{f(S | < S)}{|S|} \bigg). \label{eq: si_1.3}
\end{align}

\emph{Proof of Part 2).} The proof for part 2 closely resembles that of part 1; however, we will provide the details for completeness. We start with the lower bound on $f([1:n])$ from the strong form in Theorem~\ref{thm:MT}, in place of the upper bound in \eqref{eq: si_1.1}, i.e.,
\begin{align}
    f([1:n]) \geq \sum\limits_{S \in \mathcal{F}} \gamma(S)f(S | S^{\text{c}} \setminus{> S}) \label{eq: si_2.1}
\end{align}
Dividing both the sides by $n$, we get,
\begin{align}
    \frac{f([1:n])}{n} & \geq \frac{1}{n}\sum\limits_{S \in \mathcal{F}} \gamma(S)f(S | S^{\text{c}} \setminus{> S}) \\
    & = \sum\limits_{S \in \mathcal{F}} \gamma(S) \frac{|S|}{n} \frac{f(S | S^{\text{c}} \setminus{> S})}{|S|},
\end{align}
and, since $g$ is monotonically non-decreasing,
\begin{align}
    g\bigg( \frac{f([1:n])}{n} \bigg) & \geq g\bigg( \sum\limits_{S \in \mathcal{F}} \gamma(S) \frac{|S|}{n} \frac{f(S | S^{\text{c}} \setminus{> S})}{|S|} \bigg) \label{eq: si_2.2}.
\end{align}

Since $g$ is concave and $\sum\limits_{S \in \mathcal{F}} \frac{\gamma(S)|S|}{n} = 1$, Jensen's inequality applied to \eqref{eq: si_2.2} implies that
\begin{align}
    g\bigg( \frac{f([1:n])}{n} \bigg)  & \geq \sum\limits_{S \in \mathcal{F}} \frac{\gamma(S)|S|}{n} g\bigg( \frac{f(S | S^{\text{c}} \setminus{> S})}{|S|} \bigg). \label{eq: si_2.3}
\end{align}

\section{Fractional covering/packing version for Theorem~\ref{SasonImprov}}\label{appendix:g}
\begin{theorem}\label{SasonCovering}
    Let $\alpha$ and $\beta$ be any fractional covering, and fractional packing, respectively, with respect to a family $\mathcal{F}$ of subsets of $[1:n]$. Let $f: 2^{[1:n]} \rightarrow \mathbb{R}$ be a submodular function with $f(\phi) = 0$, such that $f([1:j])$ is non-decreasing in $j$ for $j\in[1:n]$, and $g: \mathbb{R} \rightarrow \mathbb{R}$ be a monotonically non-decreasing function.
    \begin{enumerate}[leftmargin=*]
        \item If $g$ is convex, then
            \begin{align}
                g\bigg( \frac{f([1:n])}{n\cdot v(\alpha)} \bigg) \leq \sum\limits_{S \in \mathcal{F}} \frac{\alpha(S)|S|}{n\cdot v(\alpha)} g\bigg( \frac{f(S | < S)}{|S|} \bigg),
            \end{align}
            where $v(\alpha) = \sum_{S \in \mathcal{F}} \frac{\alpha(S)|S|}{n}$.
        \item If $g$ is concave, then
            \begin{align}
                g\bigg( \frac{f([1:n])}{n\cdot v'(\beta)} \bigg) \geq \sum\limits_{S \in \mathcal{F}} \frac{\beta(S)|S|}{n\cdot v'(\beta)} g\bigg( \frac{f(S | S^{\text{c}} \setminus{> S})}{|S|} \bigg), \label{eq: fcrm3}
            \end{align}
            where $v'(\beta) = \sum_{S \in \mathcal{F}} \frac{\beta(S)|S|}{n}$.
    \end{enumerate}
\end{theorem}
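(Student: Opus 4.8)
The plan is to run the argument of Theorem~\ref{SasonImprov} (as in Appendix~\ref{appendix:a}), but to track the fact that a fractional covering (resp.\ packing) need not satisfy the exact normalization $\sum_{S\in\mathcal{F}}\gamma(S)|S|/n=1$ enjoyed by a fractional partition. The starting point is the last sentence of Theorem~\ref{thm:MT}: under the hypothesis that $f([1:j])$ is non-decreasing in $j$, the fractional partition in the strong form \eqref{eq:strongMTIneq} may be replaced by a fractional covering $\alpha$ in the upper bound and by a fractional packing $\beta$ in the lower bound, so that
\begin{align}
\sum_{S\in\mathcal{F}}\beta(S)\,f\big(S \mid S^{\text{c}}\setminus{>S}\big)\ \le\ f([1:n])\ \le\ \sum_{S\in\mathcal{F}}\alpha(S)\,f\big(S \mid {<S}\big).
\end{align}

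For part 1), I would divide the right-hand inequality above by $n\cdot v(\alpha)$, where $v(\alpha)=\sum_{S\in\mathcal{F}}\alpha(S)|S|/n$, to get
\begin{align}
\frac{f([1:n])}{n\cdot v(\alpha)}\ \le\ \sum_{S\in\mathcal{F}}\frac{\alpha(S)|S|}{n\cdot v(\alpha)}\cdot\frac{f\big(S \mid {<S}\big)}{|S|}.
\end{align}
Applying the monotonically non-decreasing $g$ to both sides preserves the inequality; and since the coefficients $\alpha(S)|S|/(n\,v(\alpha))$ are non-negative and sum to $1$ by the very definition of $v(\alpha)$ — here one uses $v(\alpha)=\tfrac1n\sum_{i}\sum_{S\ni i}\alpha(S)\ge 1>0$ — the convexity of $g$ and Jensen's inequality yield the claimed bound. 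Part 2) is symmetric: I would start from the left-hand inequality, divide by $n\cdot v'(\beta)$ with $v'(\beta)=\sum_{S\in\mathcal{F}}\beta(S)|S|/n$, apply the monotone $g$, and invoke Jensen for the concave $g$ in the reverse direction, using again that the coefficients $\beta(S)|S|/(n\,v'(\beta))$ sum to $1$.

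The calculation itself is routine; the only place that needs care is the renormalization. Because a covering (packing) breaks the fractional-partition identity $\sum_{S}\gamma(S)|S|/n=1$, the correct constant $v(\alpha)$ (resp.\ $v'(\beta)$) must be inserted simultaneously inside $g$ on the left and into the convex-combination weights on the right; it is precisely the monotonicity of $g$ that allows $g$ to be transported across the covering/packing form of the Madiman--Tetali inequality \emph{before} Jensen is applied. I would also flag the mild non-degeneracy requirement $v'(\beta)>0$ for the packing statement (automatic unless $\beta$ assigns zero weight to every set covering some coordinate), whereas $v(\alpha)\ge 1$ is automatic for any covering.
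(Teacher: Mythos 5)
Your proposal is correct and follows essentially the same route as the paper's own proof: invoke the covering/packing form of the strong Madiman--Tetali inequality (valid under the monotonicity hypothesis on $f([1:j])$), divide by $n\cdot v(\alpha)$ (resp.\ $n\cdot v'(\beta)$) so that the weights sum to one, transport $g$ across the inequality via its monotonicity, and finish with Jensen. Your added observations that $v(\alpha)\ge 1$ automatically and that $v'(\beta)>0$ should be assumed for the packing case are sensible refinements the paper leaves implicit, but they do not change the argument.
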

\begin{proof}
    Since, the submodular function $f$ is such that $f([1:j])$ is non-decreasing in $j$ for $j\in[1:n]$, it follows from Theorem~\ref{thm:MT} that
    \begin{align}
        f([1:n]) \leq \sum\limits_{S \in \mathcal{F}} \alpha(S) f(S | < S).
    \end{align}
    Dividing both the sides by $n\cdot v(\alpha)$, we get
    \begin{align}
        \frac{f([1:n])}{n\cdot v(\alpha)} & \leq \frac{1}{n\cdot v(\alpha)}\sum\limits_{S \in \mathcal{F}} \alpha(S) f(S | < S) \\
        & = \sum\limits_{S \in \mathcal{F}} \frac{\alpha(S)|S|}{n\cdot v(\alpha)} \frac{f(S | < S)}{|S|},
    \end{align}
    and, since $g$ is monotonically non-decreasing,
    \begin{align}
        g\bigg( \frac{f([1:n])}{n\cdot v(\alpha)} \bigg) & \leq g\bigg( \sum\limits_{S \in \mathcal{F}} \frac{\alpha(S)|S|}{n\cdot v(\alpha)} \frac{f(S | < S)}{|S|} \bigg) . \label{eq: fcrm1}
    \end{align}
    Notice that $\sum\limits_{S \in \mathcal{F}} \frac{\alpha(S)|S|}{n\cdot v(\alpha)} = 1$. Now, since $g$ is convex, Jensen's inequality applied to \eqref{eq: fcrm1} implies that
    \begin{align}
        g\bigg( \frac{f([1:n])}{n\cdot v(\alpha)} \bigg) \leq \sum\limits_{S \in \mathcal{F}} \frac{\alpha(S)|S|}{n\cdot v(\alpha)} g\bigg( \frac{f(S | < S)}{|S|} \bigg).
    \end{align}
    
    We can also show that \eqref{eq: fcrm3} holds when $g$ is monotonically non-decreasing and concave, by making use of the lower bound from part 1) of Theorem~\ref{thm:MT} with a fractional packing and proceeding in a manner analogous to the above argument.
\end{proof}

\section{Proof of Corollary~\ref{WeakSasonImprov}}\label{appendix:b}
\emph{Proof of Part 1).} The assertion in part 1) follows from part 1) of Theorem~\ref{SasonImprov} by noting that $f(S|<S) \leq f(S)$ as a consequence of the submodularity of $f$, together with the fact that $g$ is monotonically non-decreasing.

To analyze the necessary and sufficient conditions for equality to hold in \eqref{eq: WSasonIneq} when $g$ is strictly increasing and convex, we refer to the proof of part 1) of Theorem~\ref{SasonImprov}, with $f(S|<S)$ replaced by $f(S)$. Clearly, equality holds if and only if equality is attained both in Jensen's inequality and in the weak form of Madiman and Tetali's inequalities, i.e., in \eqref{eq: si_1.1} with  $f(S|<S)$ replaced by $f(S)$ hold. This occurs if and only if 
$g$ is linear in an interval containing $\big\{ \frac{f(S)}{|S|} : S \in \mathcal{F} \big\}$, and $f$ is modular, the latter follows from\cite[Theorem~4]{JakharKCP25}.

\emph{Proof of Part 2).} The proof of part 2) follows from part 2) of Theorem~\ref{SasonImprov}, since $f(S|S^{\text{c}}\setminus{>S}) \geq f(S|S^{\text{c}})$ as a consequence of the submodularity of $f$. The equality conditions can be characterized in a manner analogous to that of part~1).

\section{Proof of Theorem~\ref{StrongLW}}\label{appendix:c}
We choose $(X_1, X_2, X_3)$ uniformly at random from $S$. We then apply \eqref{eq:strongMTIneq} with $f(S) = H(X_S), S \subseteq [1:3]$, $\mathcal{F} = \{\!\!\{\{1,2\}, \{2,3\}, \{3,1\}\}\!\!\}$, and $\gamma(S) = 1/2, ~\forall S \in \mathcal{F}$, to obtain an upper bound on $H(X_1,X_2,X_3)$ as follows.
\begin{align}
    & H(X_1,X_2,X_3) \nonumber \\
    & \leq \sum\limits_{S \in \mathcal{F}} \gamma(S)H(X_S|X_{< S}) \\
    & = \frac{1}{2}(H(X_1, X_2) + H(X_2, X_3|X_1) + H(X_3, X_1)) \\
    & \leq \frac{1}{2} (\log{n_{xy}} + H(X_2, X_3|X_1) + \log{n_{zx}}) \label{eq: SLW_1}\\
    & \leq \frac{1}{2} (\log{n_{xy}} + \log{n^*_{yz|x}} + \log{n_{zx}}), \label{eq: SLW_2}
\end{align}
where \eqref{eq: SLW_1} follows from the definitions of $n_{xy}$ and $n_{zx}$, together with the fact that $H(W) \leq \log{|\mathcal{W}|}$, where $\mathcal{W}$ denotes the alphabet of $W$. \eqref{eq: SLW_2} follows because
\begin{align}
    &H(X_2,X_3|X_1) \nonumber \\
    & = \sum\limits_{x_1} P_{X_1}(x)H(X_2,X_3|X_1 = x_1) \\
    & \leq \sum\limits_{x_1} P_{X_1}(x_1) \log{(|\{(y,z)\in\mathbb{R}^2:(x_1,y,z)\in S\}|)}\label{eq:SLW3prime} \\
    & \leq \sum\limits_{x_1} P_{X_1}(x_1) \log{n^*_{yz|x}} \label{eq: SLW_3}\\
    & = \log{n^*_{yz|x}},
\end{align}
where \eqref{eq:SLW3prime} holds since, conditioned on $X_1=x_1$, the alphabet of $(X_2,X_3)$ is given by $\{(y,z)\in\mathbb{R}^2:(x_1,y,z)\in S\}$ and  \eqref{eq: SLW_3} follows from the definition of $n^*_{yz|x}$. Combining \eqref{eq: SLW_2} with the fact that $H(X_1,X_2,X_3) = \log{n}$, we get
\begin{align}
    n^2\leq n_{xy}n^*_{yz|x}n_{zx}.
\end{align}

\section{Proof of Theorem~\ref{StrongLW-ddim}}\label{appendix:d}
We choose $(X_1, \ldots, X_d)$ uniformly at random from $S$. We then apply \eqref{eq:strongMTIneq} with $f(S) = H(X_S), S \subseteq [1:n]$, $\mathcal{F} = \{\!\!\{[1:d]\setminus{\{i\}}: i \in [1:d]\}\!\!\}$ and $\gamma(S) = 1/(d-1), ~\forall S \in \mathcal{F}$, to obtain an upper bound on $H(X_1, \ldots, X_d)$ as follows.
\begin{align}
    &H(X_1, \ldots, X_d) \nonumber \\
    & \leq \sum\limits_{S \in \mathcal{F}} \gamma(S)H(X_S|X_{< S}) \\
    & = \frac{1}{d-1} H(X_{[2:d]}|X_1) + \frac{1}{d-1} \sum\limits_{i = 2}^d H(X_{[1:d]\setminus{\{i\}}})\\
    & \leq \frac{1}{d-1} H(X_{[2:d]}|X_1) + \frac{1}{d-1} \sum\limits_{i = 2}^d \log{n_{-i}} \label{eq: SLWD_1}\\
    & \leq \frac{1}{d-1} \Big(\log{n^*_{-d|x_d}} + \sum\limits_{i = 1}^{d-1} \log{n_{-i}} \Big), \label{eq: SLWD_3}
\end{align}
where \eqref{eq: SLWD_1} follows from the definition of $n_{-i},~i \in [2:d]$, together with the fact that $H(W) \leq \log{|\mathcal{W}|}$, where $\mathcal{W}$ denotes the alphabet of $W$. \eqref{eq: SLWD_3} follows because
\begin{align}
    &H(X_{[2:d]}|X_1) \nonumber \\
    & = \sum\limits_{x} P_{X_1}(x)H(X_{[2:d]}|X_1 = x) \\
    & \leq \sum\limits_{x} P_{X_1}(x) \log{|\{(x_2, \ldots, x_d):(x,x_2, \ldots, x_d)\in S\}|} \label{eq: SLWD_4prime}\\
    & \leq \sum\limits_{x} P_{X_1}(x) \log{n^*_{-1|x_1}} \label{eq: SLWD_4}\\
    & = \log{n^*_{-1|x_1}},
\end{align}
where \eqref{eq: SLWD_4prime} holds since, conditioned on $X_1=x$, the alphabet of $(X_2,\dots,X_d)$ is given by 
\begin{align}\{(x_2, \ldots, x_d):(x,x_2, \ldots, x_d)\in S\},
\end{align}
 and \eqref{eq: SLWD_4} follows from the definition of $n^*_{-1|x_1}$. Combining \eqref{eq: SLWD_3} with the fact that $H(X_1,\ldots,X_d) = \log{n}$, we get
\begin{align}
    n^{d-1}\leq \left(\prod_{i=2}^{d}n_{-i}\right)n^*_{-1|x_1}.
\end{align}

\section{Proof of Theorem~\ref{ExtremalGraph}}\label{appendix:e}
Let $X^n = (X_1, \ldots, X_n)$ be chosen uniformly at random from $V(G)$. So, $P_{X^n}(x^n) = 1/|V(G)|$, for $x^n \in V(G)$ and $H(X^n) = \log{|V(G)|}$. For any $x^n \in V(G)$ and $S = \{i_1, \ldots, i_d\}$, we have either $\Bar{x}^{(S)} \in V(G)$ or $\Bar{x}^{(S)} \notin V(G)$. If $\Bar{x}^{(S)} \in V(G)$, then
\begin{align}
    P_{X_S|\Tilde{X}^{(S)}}&(x_S|\Tilde{x}^{(S)}) \nonumber \\
    & = \frac{1}{|\{y^n \in V(G): \Tilde{y}^{(S)} = \Tilde{x}^{(S)}, \Bar{x}^{(S)} \in V(G)\}|} \\
    & \leq \frac{1}{m_d}, \label{eq: EG_1}
\end{align}
where \eqref{eq: EG_1} holds by definition of $m_d$ in \eqref{eq: m_d}. Similarly, if $\Bar{x}^{(S)} \notin V(G)$, then
\begin{align}\label{eq: EG_2}
    P_{X_S|\Tilde{X}^{(S)}}&(x_S|\Tilde{x}^{(S)}) \leq \frac{1}{\ell_d}
\end{align}
by the definition of $l_d$ in \eqref{eq: l_d}.
Now, let us compute $H(X^n) - H(\Tilde{X}^{(S)})$.
\begin{align}
    & H(X^n) - H(\Tilde{X}^{(S)}) \nonumber \\
    & = H(X_S|\Tilde{X}^{(S)}) \label{eq: EG_3}\\
    & = - \sum\limits_{x^n \in \{-1,1\}^n} P_{X^n}(x^n) \log{\left(P_{X_S|\Tilde{X}^{(S)}}(x_S|\Tilde{x}^{(S)})\right)} \\
    & = - \frac{1}{|V(G)|} \sum\limits_{x^n \in V(G)} \log{\left(P_{X_S|\Tilde{X}^{(S)}}(x_S|\Tilde{x}^{(S)})\right)} \\
    & = - \frac{1}{|V(G)|} \sum\limits_{\substack{x^n \in V(G),\\ \Bar{x}^{(S)} \in V(G)}} \log{\left(P_{X_S|\Tilde{X}^{(S)}}(x_S|\Tilde{x}^{(S)})\right)} \nonumber \\
    & \quad\quad - \frac{1}{|V(G)|} \sum\limits_{\substack{x^n \in V(G),\\ \Bar{x}^{(S)} \notin V(G)}} \log{\left(P_{X_S|\Tilde{X}^{(S)}}(x_S|\Tilde{x}^{(S)})\right)} \\
    & \geq \frac{\log{m_d}}{|V(G)|} \sum\limits_{x^n \in \{-1,1\}^n} \mathbbm{1}\{x^n \in V(G),\Bar{x}^{(S)} \in V(G)\} \nonumber \\
    & \quad \quad + \frac{\log{\ell_d}}{|V(G)|} \sum\limits_{x^n \in \{-1,1\}^n} \mathbbm{1}\{x^n \in V(G),\Bar{x}^{(S)} \notin V(G)\}, \label{eq: EG_4}
\end{align}
where \eqref{eq: EG_3} follows by chain rule, and \eqref{eq: EG_4} follows from \eqref{eq: EG_1} and \eqref{eq: EG_2}. Summing both the sides of \eqref{eq: EG_4} over all sets $S \in \mathcal{F}_d$, we get
\begin{align}
    &\sum\limits_{S \in \mathcal{F}_d} \left( H(X^n) - H(\Tilde{X}^{(S)}) \right) \nonumber \\
    & \geq \frac{\log{m_d}}{|V(G)|} \sum\limits_{S \in \mathcal{F}_d} \sum\limits_{x^n \in \{-1,1\}^n} \mathbbm{1}\{x^n \in V(G),\Bar{x}^{(S)} \in V(G)\} \nonumber \\
    & \quad + \frac{\log{\ell_d}}{|V(G)|} \sum\limits_{S \in \mathcal{F}_d} \sum\limits_{x^n \in \{-1,1\}^n} \mathbbm{1}\{x^n \in V(G),\Bar{x}^{(S)} \notin V(G)\}. \label{eq: EG_5}
\end{align}
We can compute both the double summations on the RHS of \eqref{eq: EG_5} in terms of $|\mathcal{F}_d|,|V(G)|$, and $|E_{\mathcal{F}_d}(G)|$, where $E_{\mathcal{F}_d}(G)$ denotes the set of edges connecting pairs of vertices $x^n,y^n\in V(G)$ such that $\Delta(x^n,y^n)\in \mathcal{F}_d$. The first double summation is given by
\begin{align}
    \sum\limits_{S \in \mathcal{F}_d} \sum\limits_{x^n \in \{-1,1\}^n} \mathbbm{1}\{x^n \in V(G),\Bar{x}^{(S)} \in V(G)\} = 2|E_{\mathcal{F}_d}(G)|, \label{eq: EG_7}
\end{align}
because every edge $x^n,\Bar{x}^{(S)} \in E_{\mathcal{F}_d}(G)$ is counted twice in the above double summation. For the the second double summation, notice that
\begin{align}
    & \sum\limits_{S \in \mathcal{F}_d} \sum\limits_{x^n \in \{-1,1\}^n} \mathbbm{1}\{x^n \in V(G),\Bar{x}^{(S)} \in V(G)\} \nonumber \\
    & \quad \quad + \sum\limits_{S \in \mathcal{F}_d} \sum\limits_{x^n \in \{-1,1\}^n} \mathbbm{1}\{x^n \in V(G),\Bar{x}^{(S)} \notin V(G)\} \\
    & = \sum\limits_{S \in \mathcal{F}_d} \sum\limits_{x^n \in \{-1,1\}^n} \mathbbm{1}\{x^n \in V(G)\} \\
    & = |\mathcal{F}_d||V(G)|. \label{eq: EG_11}
\end{align}
Now, substituting \eqref{eq: EG_7} in \eqref{eq: EG_11} gives the value of the second double summation as
\begin{align}
    &\sum\limits_{S \in \mathcal{F}_d} \sum\limits_{x^n \in \{-1,1\}^n} \mathbbm{1}\{x^n \in V(G),\Bar{x}^{(S)} \notin V(G)\} \nonumber \\
    & \quad \quad = |\mathcal{F}_d||V(G)| - 2|E_{\mathcal{F}_d}(G)|. \label{eq: EG_8}
\end{align}
Substituting the values of the double summations from \eqref{eq: EG_7} and \eqref{eq: EG_8} in \eqref{eq: EG_5} gives
\begin{align}
    &\sum\limits_{S \in \mathcal{F}_d} \left( H(X^n) - H(\Tilde{X}^{(S)}) \right) \nonumber \\
    & \geq \frac{\log{m_d}}{|V(G)|}2|E_{\mathcal{F}_d}(G)| + \frac{\log{\ell_d}}{|V(G)|} \big(|\mathcal{F}_d||V(G)| - 2|E_{\mathcal{F}_d}(G)|\big). \label{eq: EG_9}
\end{align}
Also by Shearer's lemma, we have that
\begin{align}
    \sum\limits_{S \in \mathcal{F}_d} H(\Tilde{X}^{(S)}) &=\sum_{S^\text{c} \in \Bar{\mathcal{F}}_d}\!H(\Tilde{X}^{(S)})\\
    &\geq t_d H(X^n). \label{eq: EG_12}
\end{align}
Subtracting $|\mathcal{F}_d|H(X^n)$ on both the sides of \eqref{eq: EG_12} gives 
\begin{align}
    \sum\limits_{S \in \mathcal{F}_d} \left( H(X^n) - H(\Tilde{X}^{(S)}) \right) &\leq (|\mathcal{F}_d| - t_d)H(X^n) \\
    & = (|\mathcal{F}_d| - t_d) \log{|V(G)|}. \label{eq: EG_10}
\end{align}
By putting together \eqref{eq: EG_9} and \eqref{eq: EG_10}, we get
\begin{align}
    & \frac{\log{m_d}}{|V(G)|}2|E_{\mathcal{F}_d}(G)| + \frac{\log{\ell_d}}{|V(G)|} \big(|\mathcal{F}_d||V(G)| - 2|E_{\mathcal{F}_d}(G)|\big) \nonumber \\
    & \quad\quad \leq (|\mathcal{F}_d| - t_d) \log{|V(G)|}. \label{eq: EG_6}
\end{align}
On rearranging the terms in \eqref{eq: EG_6}, we get
\begin{align}
    |E_{\mathcal{F}_d}(G)| \leq \frac{|V(G)|\Big( (|\mathcal{F}_d| - t_d) \log{|V(G)|} - |\mathcal{F}_d| \log{\ell_d}\Big)}{2\log{\frac{m_d}{\ell_d}}}.
\end{align}

\section{Details of Remark~\ref{BLMSasonRemark}}\label{appendix:f}
To recover the bound obtained in \cite[Theorem~4.2]{BoucheronLM13}, we choose $\mathcal{F} = \{\!\!\{ \{i\}: i \in [1:n] \}\!\!\}$. For this family $\mathcal{F}$, there is only one possible value of $d$, namely, $d=1$. By definition, we have $m_1 = 2$ and $\ell_1 = 1$, and since $\bar{\mathcal{F}} = \{\!\!\{ \{i\}^{\text{c}}: i \in [1:n] \}\!\!\}$, it follows that $t_1 = n-1$. Substituting these values into \eqref{eq: EGBound}, we get $|E(G)| \leq \frac{1}{2}|V(G)|\log{|V(G)|}$.

To recover the bound obtained in \cite[Theorem~4.2]{BoucheronLM13}, we choose $\mathcal{F}=\{\!\!\{ S \subseteq [1:n]: |S| \leq \tau \}\!\!\}$. For each $d \in [1:\tau]$, we have $\mathcal{F}_d = \{\!\!\{ S \subseteq [1:n]: |S| = d \}\!\!\}$ and $\bar{\mathcal{F}}_d = \{\!\!\{ S \subseteq [1:n]: |S| = n-d \}\!\!\}$. Consequently, $t_d = \binom{n-1}{n-d-1} = \binom{n-1}{d}$. Substituting these values into \eqref{eq: EGBound}, we get,
\begin{align}
    &|E(G)| \nonumber \\
    &\leq \sum\limits_{d = 1}^\tau \frac{|V(G)|\Big( (|\mathcal{F}_d| - t_d) \log{|V(G)|} - |\mathcal{F}_d| \log{\ell_d}\Big)}{2\log{\frac{m_d}{\ell_d}}} \\
    & = \sum\limits_{d = 1}^\tau \frac{|V(G)|\Big( \big(\binom{n}{d} - \binom{n-1}{d} \big) \log{|V(G)|} - \binom{n}{d}\log{\ell_d}\Big)}{2\log{\frac{m_d}{\ell_d}}} \\
    & = \sum\limits_{d = 1}^\tau \frac{|V(G)|\Big( \binom{n-1}{d-1} \log{|V(G)|} - \binom{n}{d}\log{\ell_d}\Big)}{2\log{\frac{m_d}{\ell_d}}} \\
    & = \sum\limits_{d = 1}^\tau \frac{\binom{n-1}{d-1}|V(G)|\Big( \log{|V(G)|} - \frac{n}{d}\log{\ell_d}\Big)}{2\log{\frac{m_d}{\ell_d}}}.
\end{align}

\fi
\end{document}